\def\bSig\mathbf{\Sigma}
\newcommand{\bbeta}{{\mbox{\boldmath$\beta$}}}
\newcommand{\ddL}{{\mbox{\boldmath$\ddot{L}$}}}
\newcommand{\Deltta}{{\mbox{\boldmath$\Delta$}}}
\newcommand{\ele}{{\mbox{\boldmath$l$}}}
\newcommand{\F}{{\mbox{\boldmath$F$}}} 
\newcommand{\hatbbeta}{{\mbox{\boldmath$\hat{\beta}$}}}
\newcommand{\hatttheta}{{\mbox{\boldmath$\hat{\theta}$}}}
\newcommand{\ommega}{{\mbox{\boldmath$\omega$}}}
\newcommand{\R}{\mathbb{R}}   
\newcommand{\senh}{\mathrm{senh}}
\newcommand{\ttheta}{{\mbox{\boldmath$\theta$}}}
\newcommand{\U}{{\mbox{\boldmath$U$}}}
\newcommand{\X}{{\mbox{\boldmath$X$}}}
\newtheorem{proposition}{Proposition}
\title{\textbf{Stability  Analysis and Local Influence Diagnostics for an Extreme-Value Regression Model of Anomalous Wind Gusts}}
\author[1,3]{José I. C. Lima}
\author[2,3]{Raydonal Ospina}
\author[1]{Michelli Barros}
\author[4]{Antônio M. S. Macêdo}
\affil[1]{\it  Unidade Acadêmica de Estatística, Universidade Federal de Campina Grande,  58429-900, Campina Grande/PB, Brazil}
\affil[2]{\it Departamento de Estatística, LInCa, Universidade Federal da Bahia, 40170--110, Salvador/BA, Brazil}
\affil[3]{\it Departamento de Estatística, CASTLab, Universidade Federal de Pernambuco, 50740--540, Recife/PE, Brazil}
\affil[4]{\it  Laboratório de Física Teórica e Computacional, Departamento de Física, Universidade Federal de Pernambuco, 50670-901 Recife/PE, Brazil \\ \vspace{1cm}}
\affil[1]{\tt jose.iraponil@ufcg.edu.br}
\affil[2]{\tt michellikarinne@uaest.ufcg.edu.br}
\affil[3]{\tt raydonal@de.ufpe.br}
\affil[4]{\tt amsmacedo@df.ufpe.br}
\date{}
\begin{document}

\maketitle

\begin{abstract}
Extreme events in complex physical systems, such as anomalous wind gusts, often cause significant material and human damage. Their modeling is crucial for risk assessment and understanding the underlying dynamics. In this work, we introduce a local influence analysis to assess the stability of a class of extreme-value Birnbaum-Saunders regression models, which are particularly suited for analyzing such data. The proposed approach uses the conformal normal curvature (CNC) of the log-likelihood function to diagnose the influence of individual observations on the postulated model. By examining the eigenvalues and eigenvectors associated with the CNC, we identify influential data points—physical events that disproportionately affect the model's parameters. We illustrate the methodology through a simulation study and apply it to a time series of wind gust data from Itajaí, Brazil, where a severe event caused multiple damages and casualties. Our approach successfully pinpoints this specific event as a highly influential observation and quantifies its impact on the fitted model. This work provides a valuable diagnostic tool for physicists and data scientists working with extreme-value models of complex natural phenomena.
\end{abstract}

{\bf Keywords:} Conformal normal curvature; Extreme Value Theory; Model Stability; Complex Systems; Anomalous Events; Wind Gusts

\section{Introduction}

Extreme events are ubiquitous across physical systems and scales, ranging from noise-seeded extreme intensity fluctuations in optical fibers  and cavities \cite{solli2007,dudley2008,lima2017,hammani2008,dudley2014,onorato2013} and multifractal fluctuations in various complex systems \cite{Muzy-1,Muzy-2} to catastrophic atmospheric disturbances such as anomalous wind gusts, tropical cyclones, and extreme temperatures \cite{hueso2006,lobeto2021,lam2023} and infamous oceanic rogue waves \cite{onorato2001,janssen2003,chabchoub2011,onorato2013}

These rare but consequential phenomena manifest as optical rogue waves in fiber lasers \cite{lima2017}, anomalous wind gusts that devastate infrastructure \cite{lobeto2021}, freak ocean waves that threaten maritime operations \cite{onorato2001}, and even disruptions in biological systems such as DNA replication failures \cite{bechhoefer2007}. Despite their diverse origins, these extreme events share a fundamental characteristic: they represent dramatic deviations from typical system behavior that carry disproportionate physical significance \cite{sornette2006critical}.

The statistical characterization of extreme events presents profound challenges that transcend traditional approaches based on Gaussian assumptions. In turbulent flows, the probability density functions of velocity increments exhibit pronounced skewness and fat tails, reflecting the intermittent nature of energy dissipation across scales \cite{sosa2019}. Solar wind parameters display Fr\'echet-type extreme value statistics, indicating power-law tails that fundamentally alter risk assessments for space weather \cite{Maloney_2010, Schumann_2012}. Even in controlled laboratory settings, random fiber lasers transition from Gaussian to L\'evy statistics near threshold, demonstrating how nonlinear dynamics can dramatically reshape statistical behavior \cite{lima2017}. These observations demand a unified theoretical framework capable of capturing both the universality and system-specific features of extreme phenomena.

Extreme Value Theory (EVT) provides this unifying mathematical language through the Generalized Extreme Value (GEV) distribution, which encompasses three fundamental classes: Weibull for bounded systems, Gumbel for exponentially-decaying processes, and Fr\'echet for heavy-tailed phenomena \cite{beirlant2006,coles2001introduction}. The shape parameter $\xi$ of the GEV distribution serves as a diagnostic tool, revealing whether extreme events arise from linear dynamics ($\xi \rightarrow 0$), or from nonlinear mechanisms such as modulation instability ($\xi > 0$) that can generate power-law tails \cite{onorato2013}. This framework has proven remarkably successful in connecting microscopic dynamics to macroscopic extreme events across disciplines \cite{gomes2015}.

A particularly powerful extension of classical extreme value theory is the Extreme Value Birnbaum-Saunders (EVBS) distribution, which synergistically combines the physical motivation of fatigue-life models with the mathematical generality of GEV statistics \cite{ferreira2012,leiva2016extreme}. Originally derived from cumulative damage processes in materials science \cite{birnbaum1969}, the EVBS distribution naturally captures the asymmetric, positive-valued nature of many physical observables while maintaining the flexibility to model both moderate deviations and extreme tails. This dual capability proves essential for phenomena ranging from optical intensity fluctuations to atmospheric pressure variations.

The advent of nonlinear optics has provided unprecedented experimental access to extreme event dynamics in controlled settings. Modulation instability in optical fibers generates breather solutions—localized wave packets that emerge from the nonlinear Schr\"odinger equation and manifest as optical rogue waves \cite{solli2007,hammani2008}. These breathers, including the Peregrine soliton and higher-order rational solutions, represent exact analytical descriptions of extreme events that can be generated on demand \cite{dudley2014}. Similar mechanisms operate in hydrodynamic systems, where the Benjamin-Feir instability drives the formation of oceanic rogue waves through non-resonant wave interactions \cite{janssen2003}. The remarkable correspondence between optical and oceanic extreme events underscores the universal nature of these phenomena \cite{chabchoub2011}.


Atmospheric and space plasma systems, including the solar wind and Earth's magnetosphere, exhibit extreme events influenced by turbulent cascades and long-range correlations \cite{Maloney_2010,Schumann_2012}. The Akasofu $\epsilon$ parameter, a measure of energy influx into Earth's magnetosphere from the solar wind, follows Fréchet statistics with a shape parameter $\gamma > 0$. This algebraic (power-law) tail indicates that extreme space weather events are more probable than predicted by Gaussian models. Although fractional Lévy motion (fLm) is an archetype of a dependent stochastic process known for its heavy-tailed, non-Gaussian fluctuations and long-range dependencies \cite{Maloney_2010, Schumann_2012}, the direct application of a fLm model to the Akasofu $\epsilon$ parameter was found to be not capable of reliably estimating the extreme value statistics of $\epsilon$. This highlights that temporal correlations and nonstationarities in the real data play a crucial role and significantly influence its extreme values, in ways not fully captured by the fLm model. 
Similar statistical signatures, characteristic of extreme events, appear in Titan's atmosphere, where intermittent methane rainstorms are generated through convective instability \cite{hueso2006}.

Biological systems present unique challenges for extreme value analysis due to their inherent complexity and hierarchical organization. DNA replication in \emph{Xenopus laevis} embryos exhibits extreme variability in replication timing, with some genomic regions experiencing delays that follow heavy-tailed distributions \cite{bechhoefer2007}. These extreme events arise from the stochastic nature of replication origin firing combined with the system's need for completion reliability—a biological example of competing constraints generating non-trivial statistics. The multifractal nature of these fluctuations, characterized by non-self-averaging properties, requires extensions of classical extreme value theory that account for strong correlations and scale-dependent behavior \cite{Muzy-1,Muzy-2}.

However, fitting sophisticated models to extreme event data raises a fundamental question in statistical physics: how stable are our inferences? A single anomalous observation—perhaps a particularly severe storm or an experimental outlier—can disproportionately influence parameter estimates and subsequent predictions \cite{cook1986}. This sensitivity is especially acute for heavy-tailed distributions where extreme observations carry substantial statistical weight. Without rigorous methods to assess model stability, the physical interpretation of fitted parameters remains precarious, potentially leading to erroneous conclusions about underlying mechanisms or future risks \cite{nguyen2017}.

The local influence methodology, pioneered by Cook \cite{cook1986} and refined through the conformal normal curvature approach of Poon and Poon \cite{poon1999conformal}, offers a geometric framework for probing model stability. By examining how small perturbations to individual observations affect the likelihood surface, these methods identify influential data points that dominate parameter estimation. This approach proves particularly valuable for extreme value models, where the interplay between bulk behavior and tail characteristics requires careful diagnostic analysis.

This paper addresses these challenges by developing a comprehensive local influence diagnostic toolkit specifically tailored for EVBS regression models applied to extreme events in complex physical systems. We demonstrate that this framework is not merely a statistical refinement but a powerful physical probe capable of revealing how individual extreme events shape our understanding of system dynamics. By applying our methodology to real-world data on anomalous wind gusts—a pressing challenge in modern weather forecasting \cite{lam2023}—we identify specific catastrophic storms as highly influential observations and precisely quantify their impact on model parameters. This analysis provides crucial insights into the stability of extreme event predictions and the robustness of physical interpretations derived from statistical models.
Our approach bridges the gap between abstract extreme value theory and practical applications in physics, offering a rigorous framework for building reliable models of rare events in complex natural systems. By combining the physical realism of EVBS distributions with advanced stability diagnostics, we provide tools that enable physicists to confidently characterize extreme phenomena while acknowledging the inherent uncertainties in finite data. This synthesis of statistical theory and physical insight represents a significant advance in our ability to understand and predict nature's most dramatic events.

The rest of this paper is organized as follows. Section \ref{distribuicoesevbslogevbs} reviews the EVBS and log-EVBS distributions, presenting their definitions and deriving some properties related to the existence of moments and the behavior of the log-EVBS pdf. In Section \ref{mrllogevbs}, we present the EVBS regression model proposed by \cite{leiva2016extreme} and address the joint estimation of its parameters via maximum likelihood. Section \ref{s:sim} contains a simulation study to investigate the asymptotic properties of the MLE under model misspecification. In Section \ref{influencialocal}, we introduce local influence diagnostics for the EVBS regression model, following the framework of \cite{cook1986} and \cite{poon1999conformal}, considering three perturbation schemes: case weighting, response variable perturbation, and explanatory variable perturbation. In Section \ref{aplicacaodadosreais}, we analyze a real dataset relating maximum wind speed to atmospheric pressure, illustrating the potential of the EVBS regression model and the effectiveness of our local influence methodology. Finally, Section \ref{conclusao} concludes with a discussion.

\section{The EVBS and log-EVBS distributions}\label{distribuicoesevbslogevbs}

\subsection{Definitions and some properties}

The GEV distribution is a three-parameter distribution, denoted by ${\rm GEV}({\mu, \sigma,\gamma}),$ where 
$(\gamma, \mu, \sigma) \in \mathbb{R} \times \mathbb{R} \times (0, \infty).$ The location parameter $\mu$ is related to the magnitude; the scale parameter $\sigma$ to the variability; and the shape
parameter $\gamma$ to the heaviness of the tail (it provides information about the probability of occurrence of extreme events \cite{embrechts1999extreme,coles2001introduction,dey2016extreme,gomes2015extreme}. It is widely used
for analysis of hydrologic extremes \cite{katz1992extreme}, \cite{leclerc2007non} and the effect of climate change \cite{cooley2009extreme} \cite{lobeto2021},  among others.

The cumulative distribution function is given by
\begin{equation}
\label{eq:GEV}
G_{\mu, \sigma,\gamma} (x) 
= 
G_{0, 1,\gamma}( (x - \mu)/\sigma ),
\qquad x \in \mathbb{R},
\end{equation}
where
\begin{align*}
G_{0, 1,\gamma}(x) 
=
\begin{cases}
\exp(-e^{-x}) 
& \text{if $\gamma = 0,$} \\
\exp\left\{ -(1+\gamma x)^{-1/\gamma} \right\}  
& \text{if  $1 + \gamma x > 0$ and $\gamma \ne 0,$} \\
0 & \text{if $x \le -1/\gamma$ and $\gamma > 0,$ } \\
1 & \text{if $x \ge -1/\gamma$ and $\gamma < 0.$ }
\end{cases}
\end{align*}
The above parameterization, due to \cite{mises1936distribution} and \cite{jenkinson1955frequency}, generalizes and unifies the Fr\'echet, Gumbel and Weibull trichotomy, in \cite{fisher1928limiting} across all signs of the shape parameter $\gamma$. The support, $S = \{x \in \mathbb{R} : \sigma + \gamma (x - \mu) > 0 \}$ of the GEV distribution depends on $({\mu, \sigma,\gamma})$, and it is especially challenging to develop asymptotic statistical inference based on the MLE.

Given the random variable $X\sim \mbox{GEV}(0,1,\gamma)$, \cite{ferreira2012}
introduced the EVBS distribution with parameters $\alpha >0$, $\beta >0$ and $\gamma\in \mathbb{R}$, denoted by $T\sim\mbox{EVBS}(\alpha,\beta,\gamma)$, through the increasing monotonic transformation of the random variable $X$, given by
\begin{equation}
T=\beta\left(\frac{\alpha X}{2}+ \sqrt{\left(\frac{\alpha X}{2}\right)^2+1}\right)^2.
\label{eq:relgevevbs}
\end{equation}
Therefore, $T$ is a positive random variable and its respective cdf can be written in the form
\begin{equation}
F(t)=\left\{\begin{array}{lcc}
\exp\left(-\left(1+\gamma a_t\right)^{-1/\gamma}\right); & 1+ \gamma a_t>0,  & \mbox{ if }  \gamma \neq 0.\\ 
\exp\left(-\exp\left(-a_t\right)\right); &t>0, &\ \mbox{if}\ \gamma = 0,\\ 
\end{array} \right. 
\label{eq:fdaevbs2}
\end{equation} 
in which $a_t=\frac{1}{\alpha}\left(\sqrt{\frac{t}{\beta}}-\sqrt{\frac{\beta}{t}}\right)$. Furthermore, the probability density function (pdf) of the variable $T$ is given by
\begin{equation}
f(t)=\left\{\begin{array}{lcl}
A_{t} \left(1+ {\gamma}a_t\right)^{-1-1/\gamma} \exp\left(-\left(1+ {\gamma}a_t\right)^{-1/\gamma}\right);& 1+ {\gamma}a_t>0,& \ \mbox{ if } \gamma \neq 0.\\
A_{t}\exp\left(-\exp\left(-a_t\right)-a_t\right); & t>0, &\ \mbox{if}\ \gamma=0. \\
\end{array} \right.
\label{eq:fdpevbs2}
\end{equation}
where $A_{t}=\frac{1}{2\alpha \beta}\left[\left(\frac{t}{\beta}\right)^{-\frac{1}{2}}+\left(\frac{t}{\beta}\right)^{-\frac{3}{2}}\right]$.

Figure \ref{g:fdpevbsalphabeta1gamma06258x8}(a) presents graphs of density functions associated with EVBS distributions with heavy tails and an infinite upper support limit. There is positive asymmetric behavior and strong signs of unimodality. Furthermore, an increase in the flattening of the density curve can be seen as parameter $\alpha$ grows. In Figure \ref{g:fdpevbsalphabeta1gamma06258x8}(b), we observe that the shape parameter $\gamma$ significantly influences the tail behavior and the type of asymmetry of the EVBS distribution. Particularly, when $\gamma=-1.25$, the pdf has a finite upper support limit, and there is evidence that it assumes arbitrarily large values, leading us to conjecture that there is no global maximum inside the support.

\begin{figure}[!htbp]
	\center
	\subfigure[]{\includegraphics[width=0.5\textwidth]{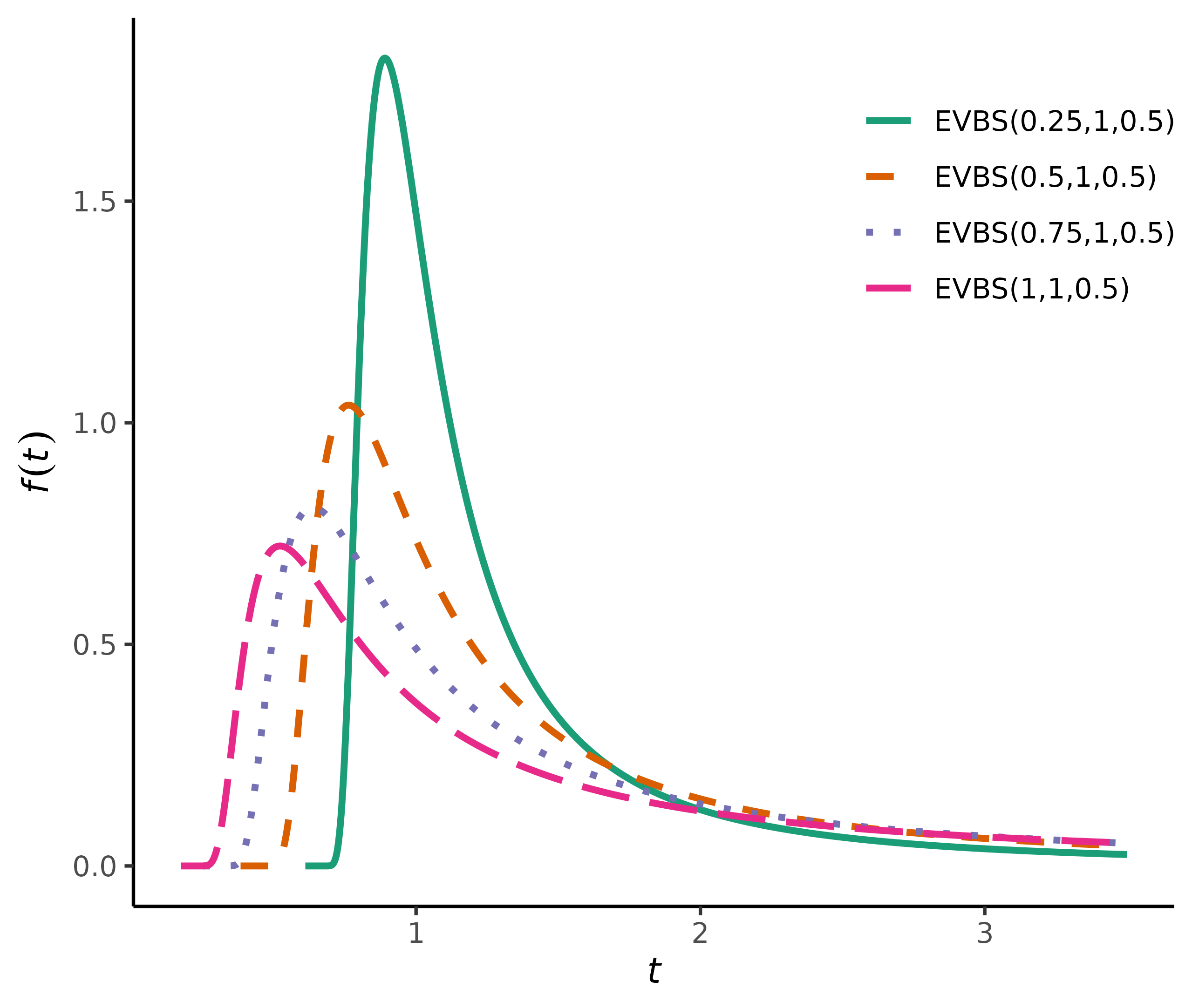} \label{fig1:a}} 
    \hspace{-0.5cm}
	\subfigure[]{\includegraphics[width=0.5\textwidth]{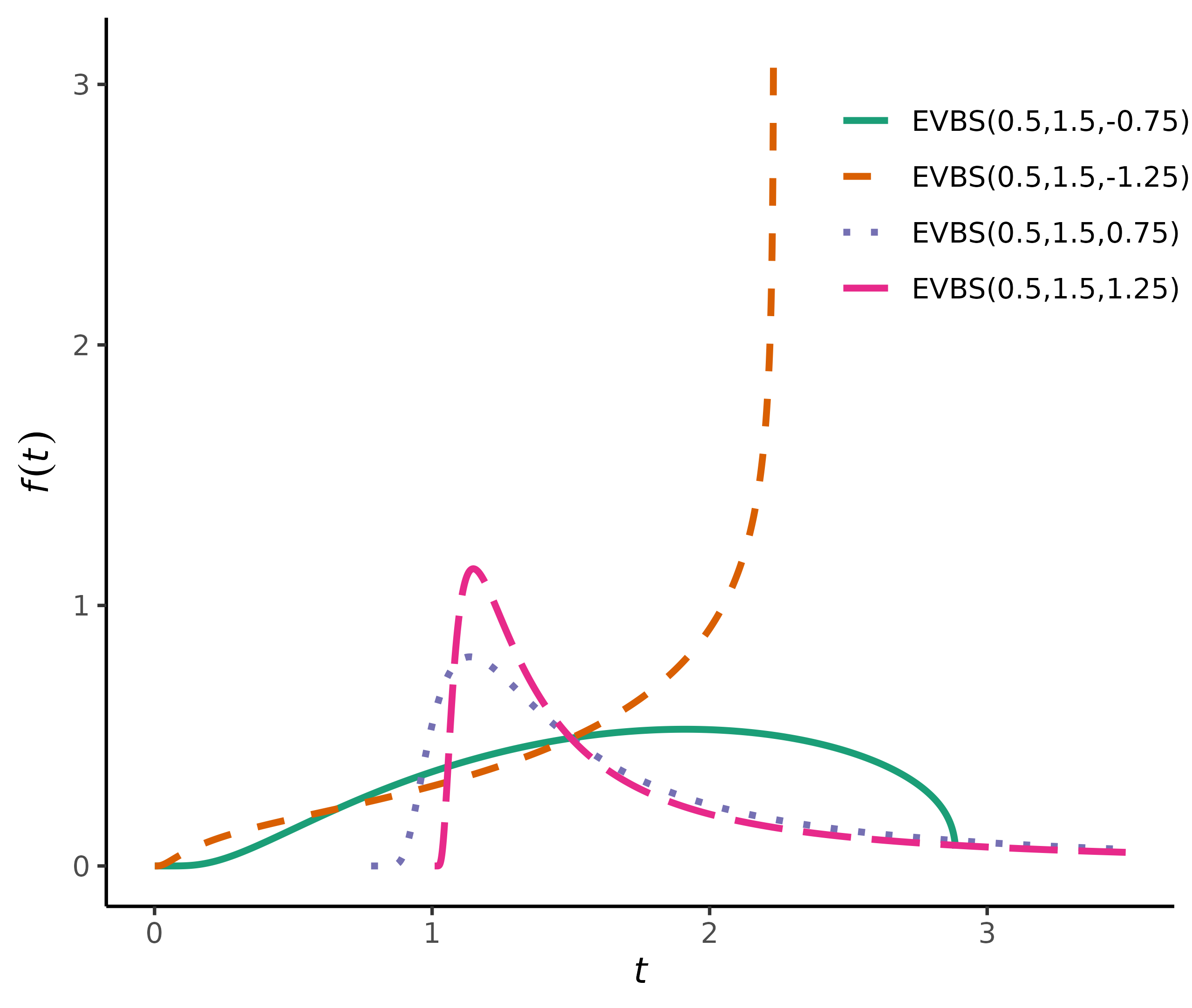} \label{fig1:b}}\\
	\caption{\label{g:fdpevbsalphabeta1gamma06258x8}EVBS pdf graphics for $\alpha \in\{1,0.75,0.50,0.25\}$, $\beta=1$ and $\gamma=0.5$ in (a)  and for $\alpha=0.5$, $\beta=1.5$ and $\gamma\in \{-1.25,-0.75,0.75,1.25\}$  in (b).}
\end{figure}

Among the properties of the EVBS distribution, we highlight that if  $T\sim \mbox{EVBS}(\alpha,\beta,\gamma)$ and $c>0$, then $cT\sim \mbox{EVBS}(\alpha,c\beta,\gamma)$. \cite{ferreira2012}  derived other properties of an EVBS distribution, they have obtained a characterization of the max-domain of attraction of the variable  $T$ in terms of the domain of $X$.

Next, we introduce the log-EVBS distribution, relating it directly to the GEV distribution. If $X\sim \mbox{GEV}(0,1,\gamma)$ with $\gamma \in \mathbb{R}$, for any  $\alpha>0$ and $\eta \in \mathbb{R}$ the variable $Y$ generated through the transformation $Y=\eta +2\mathrm{arcsinh}\left(\frac{\alpha X}{2}\right)$ is such that $Y=\log(T)$, where $T\sim \mbox{EVBS}(\alpha, e^\eta,\gamma)$. We have observed that the cdf, $F_Y$, of the variable $Y= \eta +2\mathrm{arcsinh}\left(\frac{\alpha X}{2}\right)$ is related to the function $G_{\gamma}$ through the expression
\[
F_{Y}(y)=G_{\gamma}\left(\frac{2}{\alpha}\mathrm{sinh}\left(\frac{y-\eta}{2}\right)\right), 
\]
so that $1+\frac{2\gamma}{\alpha}\mathrm{sinh}\left(\frac{y-\eta}{2}\right)>0$. Consequently, the pdf of $Y$ is given by
\begin{equation}
f_{Y}(y)= \frac{1}{2}\xi_1(y) (1+\gamma\xi_2(y))^{-1-\frac{1}{\gamma}} \exp(-(1+\gamma \xi_2(y))^{-\frac{1}{\gamma}})
\label{f:fdplogevbs},
\end{equation}
where $\xi_{1}(y)=\frac{2}{\alpha}\cosh\left(\frac{y-\eta}{2}\right)$, $\xi_{2}(y)=\frac{2}{\alpha}\mathrm{sinh}\left(\frac{y-\eta}{2}\right)$ and $1+\gamma\xi_{2}(y)>0$. The function expressed in equation (\ref{f:fdplogevbs}) is equivalent to the pdf of the variable $\log(T)$, with $T\sim \mbox{EVBS}(\alpha,e^\eta,\gamma)$, obtained by \cite{leiva2016extreme}. Therefore, the variable $Y= \eta +2\mathrm{\rm arcsinh}(\frac{\alpha X}{2})\sim \mbox{log-EVBS}(\alpha,\eta,\gamma)$. The justification of the case where $\gamma=0$  is analogous.

Given a random variable $Y\sim \mbox{log-EVBS}(\alpha,\eta,\gamma)$, when considering  $\xi_{1}(y)=\frac{2}{\alpha}\cosh\left(\frac{y-\eta}{2}\right)$ and $\xi_{2}(y)=\frac{2}{\alpha}\mathrm{sinh}\left(\frac{y-\eta}{2}\right)$, the pdf of $Y$ admits being written in the form
\begin{equation}
f(y)=\left\{\begin{array}{lll}
\frac{1}{2}\xi_1(y) (1+\gamma\xi_2(y))^{-1-\frac{1}{\gamma}} \exp(-(1+\gamma \xi_2(y))^{-\frac{1}{\gamma}}); & 1+\gamma \xi_2(y)>0,& \mbox{if } \gamma \neq 0.\\
\frac{1}{2} \xi_1(y) \exp(-\xi_2(y)-\exp(-\xi_2(y)));& y\in \mathbb{R}, & \mbox{if } \gamma=0. \\
\end{array} \right.
\label{f:fdplogaevbsxi}
\end{equation}

In developing the regression model associated with the EVBS distribution, \cite{leiva2016extreme} have admitted a log-EVBS distribution for the stochastic component of the model. Therefore, the identification of pdf properties given the log-EVBS variable, concerning monotonicity and the existence of maximum locations, will add information to the error behavior of the model, and cooperate for the implementation of inferential procedures. Let us consider a random variable  $Y\sim \mbox{log-EVBS}(\alpha,\eta,\gamma)$ whose pdf is given by equation (\ref{f:fdplogaevbsxi}), the following statements are valid:
\begin{proposition}
	\label{prop:1.1.1}
	If $\gamma<-1$, then the function  $f$ is strictly increasing in the open interval $(\eta,\eta+2{\mathrm{arcsinh}}\left(-{\alpha}/{2\gamma}\right))$.
\end{proposition}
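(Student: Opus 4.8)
The plan is to reduce the statement to a sign analysis of the logarithmic derivative. Since $f$ is a probability density that is strictly positive on its support $\{y:1+\gamma\xi_2(y)>0\}$, the sign of $f'$ coincides with the sign of $(\log f)'$ there, and the whole stated interval lies inside this support. So I would first record the elementary identities coming straight from the definitions $\xi_1(y)=\tfrac{2}{\alpha}\cosh(\tfrac{y-\eta}{2})$ and $\xi_2(y)=\tfrac{2}{\alpha}\sinh(\tfrac{y-\eta}{2})$, namely $\xi_1'(y)=\tfrac12\xi_2(y)$ and $\xi_2'(y)=\tfrac12\xi_1(y)$. Writing $u(y)=1+\gamma\xi_2(y)$ for the core factor, these give $u'(y)=\tfrac{\gamma}{2}\,\xi_1(y)$.

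Next I would take logarithms in the first branch of (\ref{f:fdplogaevbsxi}), that is $\log f(y)=\mathrm{const}+\log\xi_1(y)+\bigl(-1-\tfrac{1}{\gamma}\bigr)\log u(y)-u(y)^{-1/\gamma}$, and differentiate term by term using the identities above. A short computation then yields
\[
(\log f)'(y)=\frac12\left[\frac{\xi_2(y)}{\xi_1(y)}-\frac{(\gamma+1)\,\xi_1(y)}{u(y)}+\xi_1(y)\,u(y)^{-1-1/\gamma}\right].
\]
The final step is to check that each of the three bracketed terms is strictly positive on the interval $(\eta,\eta+2\,\mathrm{arcsinh}(-\alpha/(2\gamma)))$. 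For such $y$ one has $\tfrac{y-\eta}{2}>0$, hence $\xi_1(y)>0$ and $\xi_2(y)>0$; moreover $\xi_2(y)$ increases from $0$ to $-1/\gamma$ across this interval, so $u(y)=1+\gamma\xi_2(y)$ stays in $(0,1)$, and in particular $u(y)>0$. The first and third terms are then manifestly positive, and the second is positive precisely because the hypothesis $\gamma<-1$ forces $\gamma+1<0$ while $\xi_1(y)/u(y)>0$. Thus $(\log f)'(y)>0$, giving $f'(y)>0$ and the claimed strict monotonicity.

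The only genuine subtlety—and the place where the assumption $\gamma<-1$ is indispensable rather than merely $\gamma<0$—is the middle term: for $-1<\gamma<0$ its sign would reverse, and positivity of the bracket would no longer be automatic. I therefore expect the sign of the term $-(\gamma+1)\xi_1(y)/u(y)$ to be the crux of the argument, while everything else (the derivative bookkeeping and the containment $u(y)\in(0,1)$) is routine. It is also worth noting in passing that the right endpoint $\eta+2\,\mathrm{arcsinh}(-\alpha/(2\gamma))$ is exactly the point where $u(y)\to 0$, i.e.\ the finite upper support limit, which is why the open interval is the natural domain on which this monotonicity holds.
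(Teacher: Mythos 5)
Your proof is correct and takes essentially the same route as the paper's: both reduce the claim to a sign analysis of $(\log f)'$ on the interval, and both hinge on $\gamma+1<0$ combined with the positivity of $\xi_1(y)$, $\xi_2(y)$ and $u(y)=1+\gamma\xi_2(y)$ there. The only cosmetic difference is that the paper keeps the last two contributions grouped as a product of two negative factors, while you split the bracket into three individually positive terms; the computations are identical.
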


\begin{proposition}
	\label{prop:1.1.2}
	In the case where  $\gamma=-1$. If $\alpha\geq 4$, then the function $f$ is increasing in the intervals $(-\infty,\eta+2\mathrm{arcsinh}\left(-\frac{\alpha}{4}-\frac{1}{4}\sqrt{\alpha^2-16}\right))$ and  $(\eta+2\mathrm{arcsinh}\left(-\frac{\alpha}{4}+\frac{1}{4}\sqrt{\alpha^2-16}\right),\eta+2{\mathrm{arcsinh}}\left(-{\alpha}/{2\gamma}\right))$.On the other hand,  $f$ is decreasing in the interval $$\left(\eta+2\mathrm{arcsinh}\left(-\frac{\alpha}{4}-\frac{1}{4}\sqrt{\alpha^2-16}\right),\eta+2\mathrm{arcsinh}\left(-\frac{\alpha}{4}+\frac{1}{4}\sqrt{\alpha^2-16}\right)\right).$$ Also, if $0<\alpha<4$, then $f$ is strictly increasing in $(-\infty, \eta+2 \mathrm{arcsinh}({\alpha}/{2}))$.
\end{proposition}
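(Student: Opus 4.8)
The plan is to reduce the whole question to the sign of a single upward parabola. First I would substitute $\gamma=-1$ into the $\gamma\neq 0$ branch of equation~(\ref{f:fdplogaevbsxi}). Since $-1-1/\gamma=0$ and $-1/\gamma=1$ at $\gamma=-1$, the density collapses to
\[
f(y)=\tfrac{1}{2}\,\xi_1(y)\,\exp\!\big(\xi_2(y)-1\big),
\]
defined on $\{y:1-\xi_2(y)>0\}$, i.e.\ $\sinh\!\big((y-\eta)/2\big)<\alpha/2$, which is exactly $y<\eta+2\,\mathrm{arcsinh}(\alpha/2)=\eta+2\,\mathrm{arcsinh}(-\alpha/(2\gamma))$. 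Because $f>0$ there, its monotonicity is governed by the sign of $(\log f)'$.

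Next, writing $u=(y-\eta)/2$ so that $\xi_1=\tfrac{2}{\alpha}\cosh u$ and $\xi_2=\tfrac{2}{\alpha}\sinh u$, I would differentiate $\log f = C+\log\cosh u+\tfrac{2}{\alpha}\sinh u-1$ (with $C$ constant) to get
\[
(\log f)'(y)=\tfrac{1}{2}\Big(\tanh u+\tfrac{2}{\alpha}\cosh u\Big).
\]
Multiplying by $\cosh u>0$, which preserves sign, reduces the sign of $f'$ to that of $h(u)=\sinh u+\tfrac{2}{\alpha}\cosh^2 u$. Substituting the increasing bijection $s=\sinh u$ and using $\cosh^2 u=1+s^2$ turns this into the quadratic $h=\tfrac{2}{\alpha}s^2+s+\tfrac{2}{\alpha}$, whose leading coefficient is positive, so the entire problem becomes the sign pattern of an upward parabola in $s$.

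The discriminant of $h$ is $1-16/\alpha^2$, so its roots are real precisely when $\alpha\geq 4$, in which case they equal $s_{1,2}=-\alpha/4\mp\tfrac{1}{4}\sqrt{\alpha^2-16}$, both negative with $s_1<s_2<0<\alpha/2$. Since $\mathrm{arcsinh}$ is strictly increasing, the abscissae $y_{1,2}=\eta+2\,\mathrm{arcsinh}(s_{1,2})$ lie strictly inside the support and satisfy $y_1<y_2<\eta$. The parabola is positive for $s<s_1$ and $s>s_2$ and negative between, which transports back to: $f$ increasing on $(-\infty,y_1)$, decreasing on $(y_1,y_2)$, and increasing on $(y_2,\eta+2\,\mathrm{arcsinh}(-\alpha/(2\gamma)))$, exactly the claimed intervals. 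For $0<\alpha<4$ the discriminant is negative, so $h(s)>0$ for every $s$, giving $f'>0$ throughout and hence strict increase on $(-\infty,\eta+2\,\mathrm{arcsinh}(\alpha/2))$.

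I expect the work to be bookkeeping rather than difficulty: confirming both roots are negative (so the critical points genuinely sit left of $\eta$ and inside the support), matching the $\mp$ signs in $s_{1,2}$ to the ordering $y_1<y_2$, and noting the degenerate endpoint $\alpha=4$, where $s_1=s_2$ collapses the decreasing interval to a point and merges the two increasing intervals. None of these require nontrivial estimates, so the main obstacle is simply keeping the three variables $y$, $u$, $s$ and the support constraint aligned throughout.
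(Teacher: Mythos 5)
Your proof is correct, and its core is the same as the paper's: both reduce the problem to the sign of $(\log f)'$, which after multiplying by $\cosh\left(\frac{y-\eta}{2}\right)$ and using $\cosh^2=1+\sinh^2$ becomes an upward parabola in $s=\sinh\left(\frac{y-\eta}{2}\right)$ with roots $-\frac{\alpha}{4}\mp\frac{1}{4}\sqrt{\alpha^2-16}$, giving the increasing/decreasing/increasing pattern for $\alpha\geq 4$. Where you differ is the subcritical case: you dispose of all of $0<\alpha<4$ in one stroke via the negative discriminant (no real roots and positive leading coefficient force $h(s)>0$, hence $f'>0$ on the whole support), whereas the paper splits this into two subcases, proving $g'>0$ on $0<\alpha<2$ by an explicit bound using $|\tanh|<1$ and $\cosh\geq 1$, and then arguing for $2\leq\alpha<4$ by continuity and the intermediate value theorem that $g'$ cannot change sign since zeros of $g'$ require $\alpha\geq 4$. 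Your unified treatment is tighter and also slightly more complete: the paper's IVT step only rules out a sign change and leaves the positivity of the sign implicit (one still needs an evaluation such as $g'(\eta)=\tfrac{1}{\alpha}>0$ to conclude increase rather than decrease), while your discriminant argument delivers positivity directly. Your remarks on the bookkeeping — both roots negative and inside the support, the ordering of $y_1<y_2<\eta$, and the degenerate merging of intervals at $\alpha=4$ — match what the paper's sign analysis establishes.
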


\begin{proposition}
	\label{prop:1.1.3}
	If $\gamma=0$ and $0<\alpha<2$, then $\eta$ is a local maximum of $f$. Furthermore, $f$ is strictly increasing in the interval  $(-\infty,\eta)$. On the other hand, if $\alpha>2$, then $\eta$ is a local minimum of  $f.$
\end{proposition}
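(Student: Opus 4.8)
The plan is to work with the logarithmic derivative $g(y) := (\log f)'(y) = f'(y)/f(y)$, which is legitimate because $f>0$ on all of $\mathbb{R}$ in the $\gamma=0$ branch of \eqref{f:fdplogaevbsxi}, so that the sign of $f'$ coincides with that of $g$. Taking logarithms gives $\log f(y) = \log\tfrac12 + \log\xi_1(y) - \xi_2(y) - \exp(-\xi_2(y))$. The computation is streamlined by two elementary identities that follow directly from the definitions of $\xi_1,\xi_2$: differentiating in $y$ yields $\xi_1' = \tfrac12\xi_2$ and $\xi_2' = \tfrac12\xi_1$, while the hyperbolic identity gives $\xi_1^2 - \xi_2^2 = 4/\alpha^2$. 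Substituting the first pair, I obtain the compact form
\[
g(y) = \frac{\xi_2}{2\xi_1} - \frac{\xi_1}{2}\bigl(1 - \exp(-\xi_2)\bigr).
\]

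First I would locate the critical point. Since $\xi_1(\eta)=2/\alpha$ and $\xi_2(\eta)=0$, both terms of $g$ vanish at $y=\eta$, so $g(\eta)=0$ and $\eta$ is stationary. To classify it I use $f'' = f\,(g^2 + g')$, which at $y=\eta$ reduces to $f''(\eta) = f(\eta)\,g'(\eta)$ because $g(\eta)=0$. Differentiating $g$ with the same identities and evaluating at $\eta$ — where the quotient term supplies $\tfrac14$, the terms linear in $\xi_2$ vanish, and the remaining contribution is $-1/\alpha^2$ — gives
\[
g'(\eta) = \frac14 - \frac{1}{\alpha^2} = \frac{\alpha^2 - 4}{4\alpha^2}.
\]
As $f(\eta)>0$, the sign of $f''(\eta)$ is that of $\alpha^2-4$: negative for $0<\alpha<2$, making $\eta$ a local maximum, and positive for $\alpha>2$, making $\eta$ a local minimum. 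This settles both classification claims.

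It remains to upgrade the local maximum to strict monotonicity of $f$ on the entire half-line $(-\infty,\eta)$ when $0<\alpha<2$, and this is the step I expect to be the main obstacle, since it is a global statement rather than a one-point derivative test. The plan is to substitute $v = (\eta-y)/2 > 0$, so that $y<\eta$ corresponds to $v>0$, and rewrite $g>0$ as the equivalent inequality
\[
\frac{\cosh^2 v}{\alpha}\bigl(\exp(\tfrac{2}{\alpha}\sinh v) - 1\bigr) > \frac12\sinh v.
\]
Writing $w=\sinh v>0$ and $\cosh^2 v = 1+w^2$, the elementary bound $\exp(x)-1>x$ for $x>0$ applied with $x=2w/\alpha$ reduces the claim to $\tfrac{2(1+w^2)}{\alpha^2} > \tfrac12$, i.e. $1+w^2 > \alpha^2/4$, which holds for every $w$ precisely because $\alpha<2$ forces $\alpha^2/4<1$. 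The only subtlety is verifying that each step of the rearrangement preserves the direction of the inequality, which it does since every factor cleared across (namely $\cosh v$ and $\alpha$) is positive for $v>0$. Hence $g>0$ throughout $(-\infty,\eta)$ and $f$ is strictly increasing there.
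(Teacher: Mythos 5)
Your proposal is correct and follows essentially the same route as the paper's proof: both analyze the sign of the derivative of $\log f$, locate the stationary point at $\eta$, classify it via the second derivative $\tfrac14-\tfrac{1}{\alpha^2}$, and establish monotonicity on $(-\infty,\eta)$ from the elementary bound $e^x>1+x$ together with $\cosh\geq 1$ and $\alpha<2$. Your version of the monotonicity step is organized more cleanly—the substitution $w=\sinh\bigl(\tfrac{\eta-y}{2}\bigr)$ collapses the paper's chain of multiplied inequalities into the single reduction $1+w^2>\alpha^2/4$—but the ingredients and the structure of the argument are the same.
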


\begin{proof}
	The proofs of the propositions \ref{prop:1.1.1}--\ref{prop:1.1.3} above are given in Appendix \ref{pr1}--\ref{pr3}.
\end{proof}

Next, we will present some illustrations that describe the behavior of a pdf from the log-EVBS distribution family. First, in Figure \ref{logevbsalphaeta0gamma0}(a) there is evidence of unimodality when $\gamma=0$ and $\alpha$ approaches zero, with a trend towards symmetry; but there is clear evidence of bimodality when $\alpha>2$. Furthermore, there is evidence of an increase in dispersion and the flattening of the curve, as parameter $\alpha$ grows. In Figure  \ref{logevbsalphaeta0gamma0}(b),  we can observe, once again, the expressive effect of the parameter $\gamma$  on the behavior of the tail and the type of asymmetry of the distribution.  When $\gamma=-1.05$ and $\alpha=1$, the monotonic increasing behavior of the density curve in its support stands out, with indications that the pdf assumes arbitrarily large values near the upper limit of the support.

\begin{figure}[!htbp]
	\center
	\subfigure[]{\includegraphics[width=0.5\textwidth]{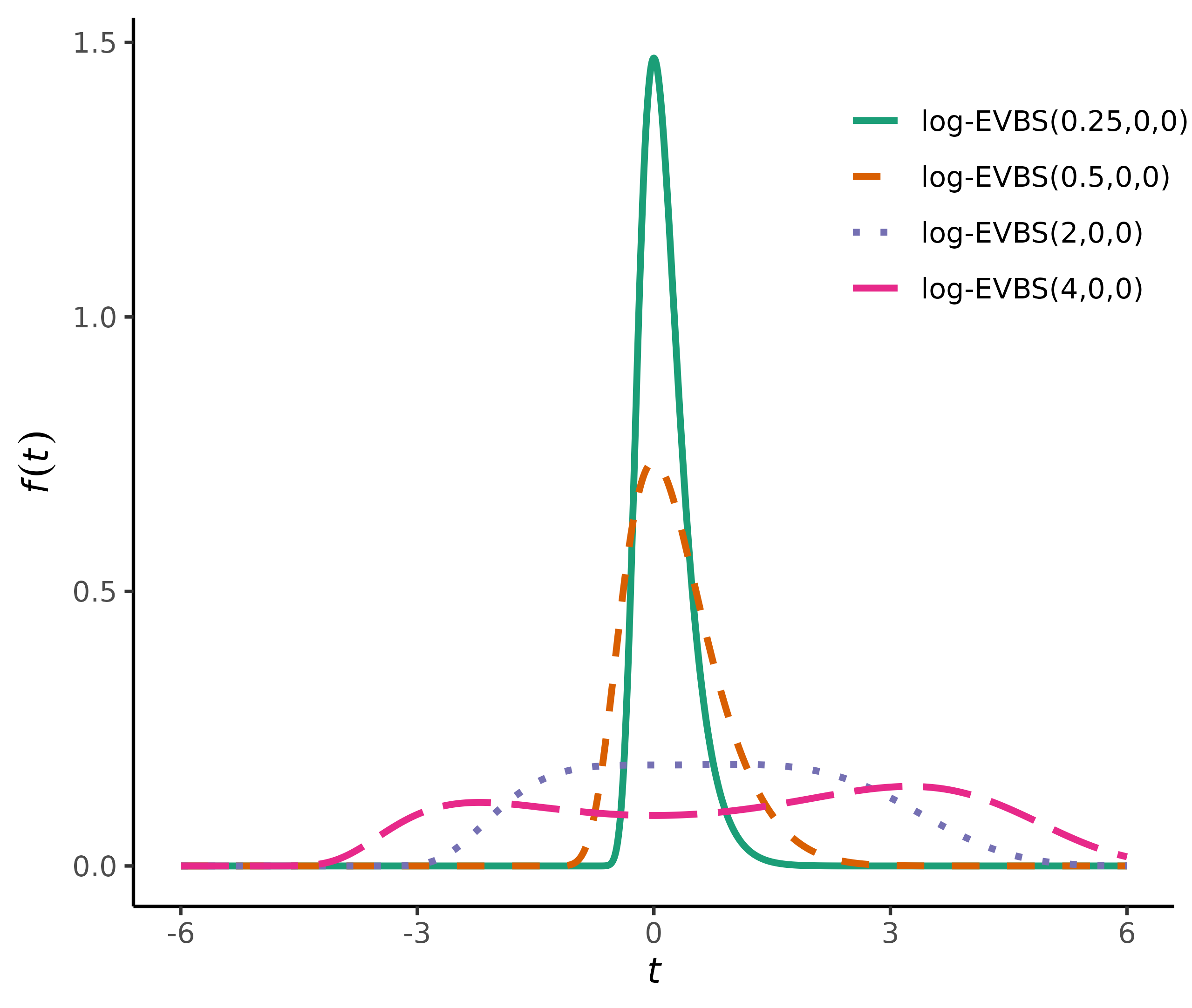} \label{fig2:a}} 
    \hspace{-0.5cm}
	\subfigure[]{\includegraphics[width=0.5\textwidth]{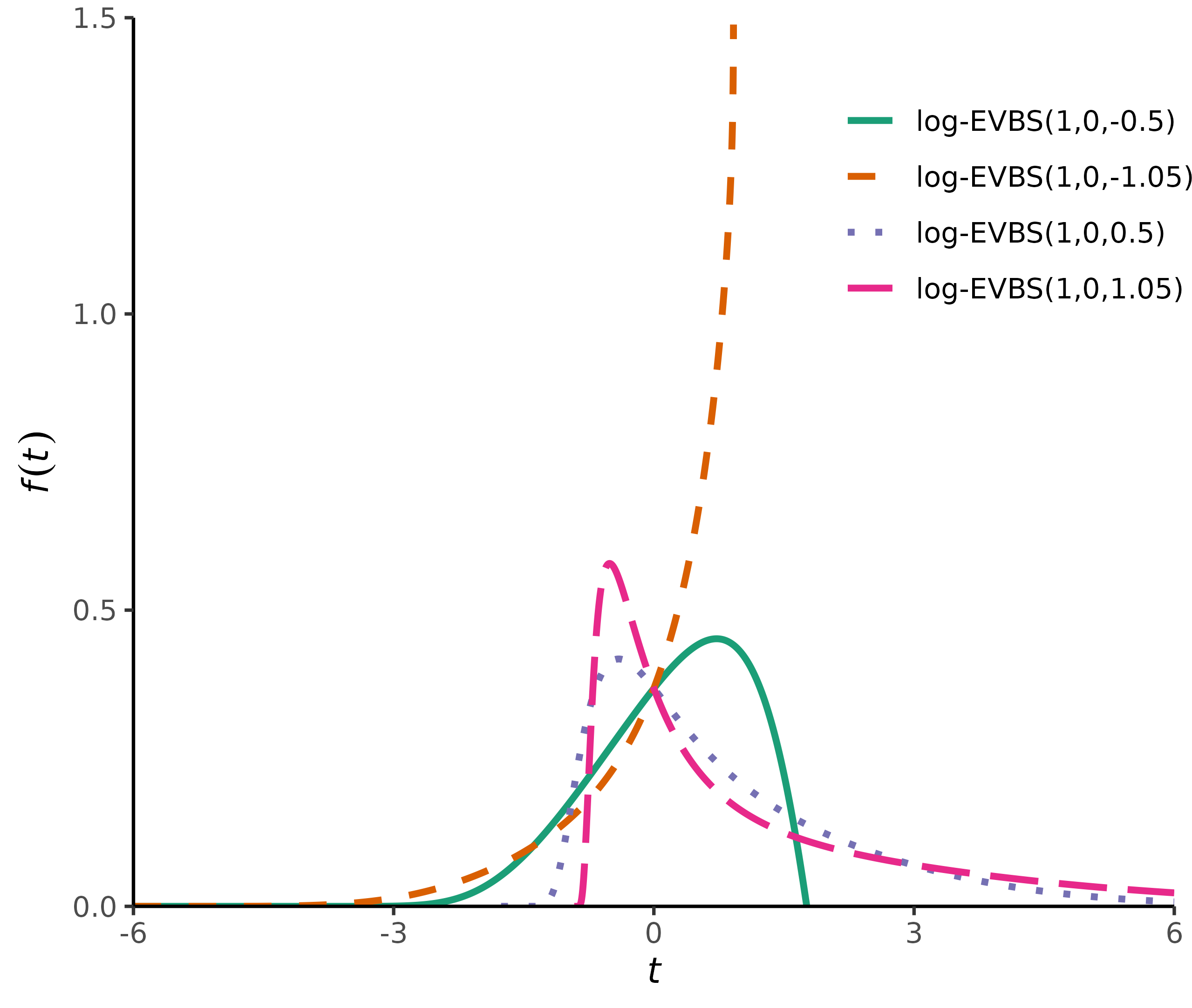} \label{fig2:b}}\\
	\caption{\label{logevbsalphaeta0gamma0}Graphics of the pdf of a log-EVBS for $\alpha \in \{0.25,0.50,2,4\}$, $\eta=0$,
		and $\gamma=0$ (a) and for $\alpha=1$, $\eta=0$ and	 $\gamma\in \{-1.05,-0.5,0.5,1.05\}$ (b).}
\end{figure}

\subsection{Moments}

In this subsection, we will deduce a necessary and sufficient condition for the existence of the first two central moments of the EVBS distribution. This is relevant because the finiteness of moments is a key hypothesis in limit theorems that underpin much of statistical mechanics.
We will resort to the relation given by equation (\ref{eq:relgevevbs}) and use some results about the existence of moments of a variable that follows a GEV distribution. 
\begin{proposition}
	\label{prop:1.2.1}
	Let $T$ be a random variable such that  $T\sim\mbox{EVBS}(\alpha,\beta,\gamma)$. Then, the following affirmations are valid:
	\begin{equation}
	\mbox{E}(T)< \infty, \mbox{ if and only if } \gamma<\frac{1}{2}.
	\end{equation} 
	
	\begin{equation}
	\mbox{E}(T^2)< \infty, \mbox{ if and only if } \gamma<\frac{1}{4}.
	\end{equation}
\end{proposition}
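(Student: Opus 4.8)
The plan is to exploit the monotone transformation (\ref{eq:relgevevbs}) connecting $T$ to $X \sim \mathrm{GEV}(0,1,\gamma)$, thereby reducing a statement about the moments of $T$ to one about the right tail of $X$. Writing $u = \alpha X/2$ and using $\mathrm{arcsinh}(u) = \log(u + \sqrt{u^2+1})$, I would first record the compact form
\begin{equation*}
T = \beta\left(u + \sqrt{u^2+1}\right)^2 = \beta\, e^{2\,\mathrm{arcsinh}(\alpha X/2)},
\end{equation*}
so that $\mbox{E}(T^k) = \beta^k\, \mbox{E}\bigl[(u+\sqrt{u^2+1})^{2k}\bigr]$, and the problem becomes the integrability of $g(X)^k$ with $g(X) = (u+\sqrt{u^2+1})^2$.

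Next I would pin down the two-sided asymptotics of $g$. As $X \to +\infty$ one has $u + \sqrt{u^2+1} \sim 2u = \alpha X$, hence $g(X) \sim \alpha^2 X^2$; as $X \to -\infty$, $u + \sqrt{u^2+1} = (\sqrt{u^2+1}-u)^{-1} \sim (2|u|)^{-1} \to 0$, so $g(X) \to 0$. Since $g$ is continuous, bounded on any left half-line and comparable to $X^2$ on the right, there are constants $0 < c_1 \le c_2$ with $c_1 X^2 \le g(X) \le c_2 X^2$ for all sufficiently large $X$. This sandwich transfers integrability in both directions and yields the clean equivalence $\mbox{E}(T^k) < \infty \iff \mbox{E}(X_+^{2k}) < \infty$, where $X_+ = \max(X,0)$; the lower tail of $X$ contributes a bounded, always-integrable term.

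The remaining ingredient is the moment structure of the GEV law. For $\gamma \le 0$ there is nothing to prove: when $\gamma < 0$ the support of $X$ is bounded above (so $T$ is bounded and every moment is finite), and when $\gamma = 0$ the Gumbel tail $P(X>x) \sim e^{-x}$ makes $\mbox{E}(X_+^m)$ finite for all $m$; in both regimes the thresholds $1/2$ and $1/4$ exceed $\gamma$, so the stated biconditionals hold vacuously. For $\gamma > 0$ I would use the Fr\'echet-type tail $P(X > x) = 1 - \exp\{-(1+\gamma x)^{-1/\gamma}\} \sim (\gamma x)^{-1/\gamma}$ as $x \to \infty$ together with the layer-cake identity $\mbox{E}(X_+^m) = m\int_0^\infty x^{m-1} P(X > x)\,dx$; the integrand behaves like $x^{m-1-1/\gamma}$ at infinity, so the integral converges if and only if $m < 1/\gamma$. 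Taking $m = 2k$ gives $\mbox{E}(T^k) < \infty \iff 2k < 1/\gamma \iff \gamma < 1/(2k)$, which is $\gamma < 1/2$ for $k=1$ and $\gamma < 1/4$ for $k=2$, as claimed.

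I expect the main obstacle to be the careful verification that the tail sandwich $c_1 X^2 \le g(X) \le c_2 X^2$ genuinely transfers integrability in both directions, so that the full biconditional—and not merely one implication—is obtained, together with the clean separation of the three regimes of $\gamma$. Once the right-tail asymptotics of $g$ and of the GEV survival function are established, the rest reduces to a routine convergence-of-integrals argument.
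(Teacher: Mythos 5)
Your proof is correct, and it reaches the result by a more self-contained route than the paper's. Both arguments rest on the same underlying reduction---comparing $T$ with $X^2$ through the transformation (\ref{eq:relgevevbs})---but they execute it differently. The paper derives global algebraic bounds, $T \le \beta(\alpha^2X^2+2)$ from above and $T > \tfrac{\beta\alpha^2}{2}(X^2+X|X|) = \beta\alpha^2 X_+^2$ from below, and then cites as known the GEV moment facts ($\mathrm{E}(X^2)<\infty$ iff $\gamma<1/2$, $\mathrm{E}(X^4)<\infty$ iff $\gamma<1/4$). You instead establish an asymptotic sandwich $g(X)\asymp X^2$ on the right tail (with $g$ bounded on the left half-line), reduce everything to the finiteness of $\mathrm{E}(X_+^{2k})$, and then prove the GEV moment criterion from scratch via the Fr\'echet tail $P(X>x)\sim(\gamma x)^{-1/\gamma}$ and the layer-cake formula, treating $\gamma<0$, $\gamma=0$, $\gamma>0$ separately. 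What the paper's route buys is brevity and freedom from asymptotic estimates; what yours buys is self-containedness and, notably, a cleaner handling of the ``only if'' direction: by working with $X_+$ throughout, you sidestep the paper's auxiliary claim that non-integrability of $Y^2$ forces non-integrability of $Y^2+Y|Y|=2Y_+^2$---a statement that is false for a general random variable (a heavy left tail with a bounded right tail is a counterexample) and is valid in the paper's setting only because a GEV variable with $\gamma\ge 1/2>0$ is bounded below. One cosmetic slip: for $\gamma\le 0$ the biconditionals hold \emph{trivially} (both sides are true), not ``vacuously.''
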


\begin{proof} 
	By developing the square present in equation (\ref{eq:relgevevbs}) and using some properties of the modular function, combining it with the fact that the square root function is strictly increasing, we have established the following simultaneous inequality:
	\[
	T\leq \frac{\beta}{2}\left(\alpha^2 X^2+2+|\alpha X|\sqrt{\alpha^2X^2+4}\right)< \frac{\beta}{2}\left(\alpha^2 X^2+2+\sqrt{(\alpha^2X^2)^2+4\alpha^2X^2+4}\right).
	\]
	Therefore,
	\begin{equation}
	T\leq \beta(\alpha^2X^2+2).
	\label{d:limsupEVBS}
	\end{equation}
	We know that if $X\sim \mbox{GEV}(0,1,\gamma)$ and $\gamma<\frac{1}{2}$, then $\mbox{E}(X^2)<\infty$. Consequently, using inequality (\ref{d:limsupEVBS}), we will have that if $\gamma < \frac{1}{2}$, then $\mbox{E}(T)$ is finite. It follows from inequality (\ref{d:limsupEVBS}) that $T^2\leq \beta^2(\alpha^4X^4+4\alpha^2X^2+4)$. Hence, if $X\sim \mbox{GEV}(0,1,\gamma)$ and $\gamma< \frac{1}{4}$, then $\mbox{E}(X^4)< \infty$ and, consequently, $\mbox{E}(T^2)$ is finite.
	
	On the other hand, in order to show that $\mbox{E}(T)<\infty$ implies $\gamma<\frac{1}{2}$,  we will prove the contrapositive $\gamma\geq 1/2$ implies $\mbox{E}(T)=\infty$. From the inequality
	\begin{equation}
	\frac{\alpha X}{2}+{\sqrt{ \left(\frac{\alpha X}{2}\right)^2 + 1 } } > \frac{\alpha X}{2}+ \left | \frac{\alpha X}{2}\right| \geq 0,
	\end{equation}
	we obtain 
	\begin{equation}
	T=\beta\left(\frac{\alpha X}{2}+ \sqrt{\left(\frac{\alpha X}{2}\right)^2+1}\right)^2 > \beta\frac{\alpha^2}{2}(X^2+X|X|).
	\label{d:limiteinfT}
	\end{equation}
	At this point, we observe that the following result is valid: if $Y$ is a random variable in a probability space $(\Omega, {\cal  B}(\R),P)$ and  $Y^2$ is not integrable, then $Y^2+Y|Y|$ is not integrable. Furthermore, let us consider the fact that a random variable, whose distribution is $\mbox{GEV}(0,1,\gamma)$ has a finite second moment if, and only if, the parameter $\gamma < \frac{1}{2}$. Hence, if $\gamma\geq \frac{1}{2}$, then $E(X^2)=\infty$ and, consequently, using inequality (\ref{d:limiteinfT}), we have $E(T)=\infty$. Similar reasoning allows us to establish the inequality $$T^2> \frac{1}{2}\alpha^4\beta^2(X^4+X^3|X|).$$ Hence, if $\gamma\geq 1/4$, then $T^2$ is not integrable. 
\end{proof}

With the result expressed in Proposition \ref{prop:1.2.1}, we can explain sufficient conditions for the existence of the moments of the variable $Y=\log T$ with log-EVBS distribution. Note that the function $\varphi:(0,\infty)\to \R, \ \varphi(x)=-\log x$ is convex, so by Jensen's Inequality we have that if $Y=\log(T)\sim \mbox{log-EVBS}(\alpha,\eta,\gamma)$ and $\gamma<\frac{1}{2}$, then $\mbox{E}(Y)$ is finite. On the other hand, given that $x>\log(x)$ for all $x\in (0,+\infty)$, if $T>1$, then $T>\log(T)>0$. Hence, $(\log(T))^2<T^2$. Thus, if  $\gamma<\frac{1}{4}$, then $\mbox{E}(Y^2)< \infty$.

\section{Extreme Value Birnbaum-Saunders regression model}\label{mrllogevbs}

Let us consider the EVBS regression model
\begin{equation}
T_i=\varrho_i\delta_i=\exp(\mathbf{x}_{i}^{\top}{\mbox{\boldmath{${\beta}$}}})\delta_i,\quad i=1,\ldots,n, 
\label{mrllevbs}
\end{equation} where $T_i$ is the observed response variable, $\mathbf{x}_{i}^{\top}=(1,x_{i1},\ldots,x_{i(p-1)})$ are the values of the control parameters (or covariates), ${\mbox{\boldmath{${\beta}$}}}=(\beta_0,\beta_1,\ldots,\beta_{p-1})^{\top}$ is a vector of unknown parameters, and $\delta_i\sim \mbox{EVBS}(\alpha,1,\gamma)$ is the stochastic (or noise) component of the model.
Applying the logarithm function to the variable  $T_i$ given in (\ref{mrllevbs}), we obtain the following log-linear regression model:
\begin{equation}
Y_i=\log(T_i)=\mathbf{x}_{i}^{\top}{\mbox{\boldmath{${\beta}$}}}+\varepsilon_i, \quad i=1,...,n,
\label{mrlevbs}
\end{equation} in which  $Y_i\sim \mbox{log-EVBS}(\alpha,\eta_i,\gamma)$, where $\eta_i=\mathbf{x}_{i}^{\top}{\mbox{\boldmath{${\beta}$}}}$.  Note that if $\gamma=0$, the support of $Y$ is the set of real numbers; but, if $\gamma>0$, then $y_i> \eta_i +2{\mathrm{arcsinh}}\left(-{\alpha}/{2\gamma}\right)$; on the other hand, if $\gamma<0$, then $y_i< \eta_i +2{\mathrm{arcsinh}}\left(-{\alpha}/{2\gamma}\right)$.

\subsection{Likelihood inference}

In this subsection, we have derived expressions for the score function and the Hessian matrix of the log-likelihood function of the parameter vector ${\mbox{\boldmath $\theta$}}=({\mbox{\boldmath $\beta$}}^{\top}, \alpha, \gamma)^{\top}$ associated with the model given in  (\ref{mrlevbs}), equivalent to those obtained by \cite{leiva2016extreme}, which enable the determination of maximum likelihood estimates through non-linear optimization algorithms.

Given a $ \mathbf{y}=(y_1,y_2,\ldots,y_n)^{\top}$ sample of independent observations associated with the model (\ref{mrlevbs}),  the log-likelihood function of ${\mbox{\boldmath $\theta$}}=({\mbox{\boldmath $\beta$}}^{\top}, \alpha, \gamma)^{\top}$ can be expressed as
\begin{equation}
\ell({\mbox{\boldmath $\theta$}};\mathbf{y})=\left\{\begin{array}{ll}
-n\log 2 +  \displaystyle \sum_{i=1}^{n}\log(\xi_{i1})
-\left(1+\frac{1}{\gamma}\right) \displaystyle  \displaystyle \sum_{i=1}^{n}\log(1+\gamma\xi_{i2})- \displaystyle \sum_{i=1}^{n}(1+\gamma\xi_{i2})^{-\frac{1}{\gamma}}; &  \gamma \neq 0.\\
n\log 2 +  \displaystyle \sum_{i=1}^{n}\log(\xi_{i1})- \displaystyle \sum_{i=1}^{n}\xi_{i2}- \displaystyle \sum_{i=1}^{n}\exp(-\xi_{i2}); & \gamma=0. \\
\end{array} \right.
\label{eqmvlogevbs}
\end{equation}
where $\xi_{i1}=\frac{2}{\alpha}\mathrm{cosh}\left(\frac{y_i-\mathbf{x}_{i}^{\top}\bbeta}{2}\right)$ and  $\xi_{i2}=\frac{2}{\alpha}\mathrm{sinh}\left(\frac{y_i-\mathbf{x}_i^{\top}\bbeta}{2}\right)$, for each $i=1,2,\ldots,n$.

The score function is given by: 

\begin{equation}
\U(\ttheta,\mathbf{y})=\frac{\partial \ell(\ttheta, \mathbf{y})}{\partial \ttheta}=
\left(
\begin{array}{l}
\frac{\partial} {\partial \bbeta} \ell(\ttheta,\mathbf{y})\\
\frac{\partial} {\partial \alpha} \ell(\ttheta,\mathbf{y})\\
\frac{\partial} {\partial \gamma} \ell(\ttheta,\mathbf{y})\\
\end{array}
\right)_{(p+2)\times 1}, \mbox{ for }\gamma\ne 0 
\label{fescoredifzero}
\end{equation}
and

\begin{equation}
\U(\ttheta,\mathbf{y})=
\left(
\begin{array}{l}
\frac{\partial} {\partial \bbeta} \ell(\ttheta,\mathbf{y})\\
\frac{\partial} {\partial \alpha} \ell(\ttheta,\mathbf{y})\\
\end{array}
\right)_{(p+1)\times 1}, \mbox{ for } \gamma=0,
\label{fescoreigualzero}
\end{equation}
whose entries are given by: 

$\gamma\neq 0:$
\begin{equation}
\left\{\begin{array}{cc}
\frac{\partial}{\partial \beta_j}\ell({\mbox{\boldmath $\theta$}};\mathbf{y})=\frac{1}{2}\displaystyle\sum_{i=1}^{n}x_{ij} \left[ \frac{\xi_{i1}}{1+\gamma \xi_{i2}}\left(1+\gamma - (1+\gamma \xi_{i2})^{-1/\gamma}\right) -\frac{\xi_{i2}}{\xi_{i1}}\right], &\quad j=0,1,\ldots,p-1. \\
\frac{\partial}{\partial \alpha}\ell({\mbox{\boldmath $\theta$}};\mathbf{y}) =-\frac{n}{\alpha} + \frac{1}{\alpha} \displaystyle \sum_{i=1}^{n}\frac{\xi_{i2}}{1+\gamma \xi_{i2}}\left(1+\gamma-(1+\gamma\xi_{i2})^{-1/\gamma}\right). & {} \\

\frac{\partial}{\partial \gamma}\ell({\mbox{\boldmath $\theta$}};\mathbf{y}) =-\frac{1}{\gamma} \displaystyle \sum_{i=1}^{n}\frac{\xi_{i2}}{1+\gamma \xi_{i2}}\left[1+\gamma -(1+\gamma\xi_{i2})^{-1/\gamma} \right]+ & {} \\
\hspace{2.0cm} + \frac{1}{\gamma^2}\displaystyle \sum_{i=1}^{n} \log(1+\gamma \xi_{i2})\left[1-(1+\gamma \xi_{i2})^{-1/\gamma}\right].
\end{array} \right.
\label{eq:semvlogbs1}
\end{equation}
and $\gamma= 0:$
\begin{equation}
\left\{\begin{array}{lc}
\frac{\partial}{\partial \beta_j}\ell({\mbox{\boldmath $\theta$}};\mathbf{y})=\frac{1}{2}\displaystyle\sum_{i=1}^{n}x_{ij}\left(\xi_{i1}(1-\exp(-\xi_{i2}))-\frac{\xi_{i2}}{\xi_{i1}}\right), &\mbox{ for }\quad j=0,1,\ldots,p-1. \\
\frac{\partial}{\partial \alpha}\ell({\mbox{\boldmath $\theta$}};\mathbf{y}) =-\frac{n}{\alpha} + \frac{1}{\alpha} \displaystyle \sum_{i=1}^{n}\xi_{i2}(1-\exp(-\xi_{i2})). & {} \\
\end{array} \right.
\label{eq:semvlogbs}
\end{equation}

The MLE of $\ttheta$, denoted by $\hatttheta=(\hatbbeta,\hat{\alpha},\hat{\gamma})^{\top}$, must be such that $\U(\hatttheta)=\mathbf{0}$. However, in the case of the log-EVBS model, the system of equations $\U(\ttheta)=\mathbf{0}$ does not present an analytic solution, requiring non-linear numerical optimization procedures in order to obtain solutions. Furthermore, the Hessian matrix of the log-likelihood function evaluated in  $\hatttheta$, denoted by $\ddL_{\hatttheta}$, is given by 
$$\ddL_{\hatttheta}=\frac{\partial }{\partial \ttheta}\U(\ttheta)\left|_{\ttheta=\hatttheta} \right.={\frac{\partial^2 }{\partial \ttheta \partial \ttheta^{\top}}\ell(\ttheta)}\left|_{\ttheta=\hatttheta}, \right.$$ whose terms we present in Appendix \ref{apendiceB}. For $\hatttheta$ to be a local maximum of  $\ell(\cdot)$, it is sufficient if the conditions $\U(\hatttheta)=\mathbf{0}$ and $\ddL_{\hatttheta}$ are defined as negative. It is known that the matrix $-\ddL_{\hatttheta}^{-1}$, called the inverse of the observed Fisher information, plays a relevant role in obtaining numerical estimates and in the calculation of curvatures. 

At this point, we will make a consideration. Given the relationship between the EVBS and GEV distributions, the maximum likelihood estimation of the parameter vector of the log-EVBS regression model will present similar difficulties to those found in the GEV family, especially the ones arising from the fact that the support of this distribution depends on parameter values, violating one of the classic regularity conditions. Therefore, the verification of hypotheses to ensure asymptotic properties of the MLE is a problem that deserves to be investigated, but it is far from the objective of this work. \cite{Dombry2015}, \cite{Bucher2017}, and  \cite {DombryFerreira2019}  present sufficient conditions to ensure the consistency and asymptotic normality of the MLE in the GEV family. In this work we will maintain these assumptions, assuming that the monotonic transformation on the GEV variable will not relax them. Furthermore, we will consider the assumption that the parameter $\alpha\leq 2$, admitted by \cite{Rieck1991}, to avoid multiplicity of solutions in the system of maximum likelihood equations. 

\section{Simulations}
\label{s:sim}

In this section, we have conducted a Monte Carlo simulation study to evaluate the performance of the estimators under some scenarios. It is noteworthy that this study is necessary, as we have performed the estimation of the parameter vector $\ttheta=(\bbeta,\alpha,\gamma)^{\top}$ jointly, unlike the study conducted by \cite{leiva2016extreme}. 

We have considered $5,000$ replicas for different sample sizes, $n\in\{60,120,180\}$. We have assumed the regression model in which the response $Y_i=\eta_i+\varepsilon_{i}$, $i=1,...,n$,  where $Y_i=\log(T_i)$ , $\eta_i=\beta_0+\beta_1x_i$, and $\varepsilon_i\sim \mbox{log-EVBS}(\alpha,0,\gamma)$.  In addition, we have explored three distinct scenarios, whose characterizations will be given below, and obtained empirical approximations for the performance of the maximum likelihood estimators of the model parameters: the absolute bias, the root mean square error (RMSE), and the coverage probability (CP) for the confidence intervals of the parameters with a nominal level fixed at $95\%$. This is under the hypothesis of asymptotic normality of the MLE. Finally, we emphasize that, to numerically obtain the maximum point of the log-likelihood function, we have used the {\tt optim()} function, contained in the statistical software {\sf R}, combined with the BFGS method, and the use of an analytic Hessian matrix. 

\subsection*{Characterization of the scenarios.}

\begin{enumerate}
	\item In Scenario 1, we fixed the model parameters at $\beta_0=0.5$, $\beta_1=0.5$, $\alpha=0.5$, and $\gamma \in\{-0.20,0,0.20\}$. Furthermore, we have assumed that the values of the explanatory variable $X$ belong to the unit interval $(0,1)$. The purpose is to verify if there is empirical support that corroborates the hypothesis of consistency of the MLE of the model parameters.  
	\item Scenario 2 is characterized by leverage imposed on the values of the explanatory variable $X$, so that $10\%$ of its values are randomly fixed in the interval (5,10). Again, the parameters are fixed at $\beta_0=0.5$, $\beta_1=0.5$, $\alpha=0.5$, and $\gamma \in\{-0.20, 0, 0.20\}$. The objective is to investigate the behavior of the estimates in situations in which extreme values for the response are induced by leverage on the values of the explanatory variable.  
	\item  In Scenario 3, we have imposed changes in the stochastic component of the model, so that the response variable aggregates values from EVBS distributions, in which different values for parameter $\alpha$ are considered. The true value of parameter $\alpha$ is $1.2$. However, we have ensured that $10\%$ of the sample values were associated with $\alpha=0.7$. The objective is to verify if the maximum likelihood estimates are sensitive to this mixture of distributions. We set the other parameters at  $\beta_0=1$, $\beta_1=2$, and  $\gamma\in \{-0.20, 0, 0.20\}$.
\end{enumerate}

As we can see in Tables \ref{tabela1cenario1} and \ref{tabela2cenario1}, corresponding to Scenario 1, the biases of the MLEs and their RMSEs tend to zero as the sample size grows. For instance, when $\gamma=0.20$ and the sample size increases through 60, 120, and 180, the bias of $\hat{\gamma}$ decreases towards zero, with values of $0.009$, $0.004$, and $0.003$, respectively. In the same situation, the RMSE also decreases. A similar behavior occurs for the estimates of the other model parameters. Hence, the results empirically support the consistency of the MLE, $\hatttheta$. The results in Tables \ref{tabela1cenario2} and \ref{tabela2cenario2}, corresponding to Scenario 2, suggest that the presence of extreme observations in the response, induced by leverage in the explanatory variable, are satisfactorily captured by the model, not significantly affecting the MLE bias. The RMSE for the $\beta_1$ coefficient estimator shows a notable decrease in Scenario 2 compared to Scenario 1. In Scenario 1 (without leverage), the RMSE values were $0.201$, $0.134$, and $0.111$ for sample sizes of 60, 120, and 180, respectively (Table \ref{tabela2cenario1}). In contrast, in Scenario 2 (with leverage), these values dropped to $0.031$, $0.022$, and $0.018$ (Table \ref{tabela2cenario2}).

The results for Scenario 3, shown in Tables \ref{tabela1cenario3} and \ref{tabela2cenario3}, indicate a substantial relative increase in the bias of the MLE for parameter $\alpha$ compared to the previous scenarios. Furthermore, there is an increase in the RMSE of the MLE of $\alpha$ compared to the values in the tables for Scenarios 1 and 2. Finally, the results in Tables \ref{tabela2cenario1}, \ref{tabela2cenario2}, and \ref{tabela2cenario3}, which show the empirical coverage probabilities for the confidence interval of parameter $\alpha$, reveal that the mixture of distributions caused by the change in $\alpha$ was detected by the MLE. The coverage rate for the interval of $\alpha$, shown in the penultimate column of Table \ref{tabela2cenario3}, was slightly affected compared to those in Tables \ref{tabela2cenario1} and \ref{tabela2cenario2}, which remained close to the nominal theoretical level of $95\%$. 

\begin{table}[!htbp]
	\begin{center}
		\caption{Empirical mean and absolute bias of the MLEs under Scenario 1. } 
                \vspace{0.2cm}
		\label{tabela1cenario1}
		\begin{tabular}{@{}c|c|cccc|c|cccc @{}}
			\hline
			\multicolumn{6}{@{}c|}{MLE}& 
			\multicolumn{5}{@{}c}{BIAS} \\ \hline
			$n$ & $\gamma$ & $\hat{\beta}_0$ & $\hat{\beta}_1$ & $\hat{\alpha}$ & $\hat{\gamma}$ & $\gamma$ & $\hat{\beta}_0$ & $\hat{\beta}_1$ & $\hat{\alpha}$ & $\hat{\gamma}$ \\ \hline
			{}    &-0.20&0.508&0.505&0.488&-0.220          &-0.20&0.008&0.005&-0.012&-0.020\\
			{60}  &0    &0.508&0.501&0.485&-0.004          &0    &0.008&0.001&-0.015&-0.004 \\
			{}    &0.20 &0.507&0.497&0.483&0.209           &0.20&0.007&-0.003&-0.017&0.009\\ \hline 
			
			{}     &-0.20&0.506&0.500&0.495&-0.211        &-0.20&0.006&0.000&-0.005&-0.011 \\
			{120}  &0   &0.504&0.500&0.493&-0.002         &0    &0.004&0.000&-0.007&-0.002 \\
			{}     &0.20 &0.503&0.499&0.492&0.204    	  &0.20 &0.003&-0.001&-0.008&0.004\\ \hline
			
			{}    &-0.20&0.504&0.500&0.496&-0.207        &-0.20&0.004&0.000&-0.004&-0.007 \\
			{180}  &0    &0.502&0.501&0.495&-0.001       &0   &0.002&0.001&-0.005&-0.001\\
			{}    &0.20  &0.501&0.501&0.495&0.203        &0.20  &0.001&0.001&-0.005&0.003 \\ \hline
		\end{tabular}
	\end{center}
\end{table}

\begin{table}[!htbp]
	\begin{center}
		\caption{Empirical root mean square error (RMSE) and coverage probability (CP) of the MLEs under Scenario 1.}
        \vspace{0.2cm}
		\label{tabela2cenario1}
		\begin{tabular}{@{}c|c|cccc|c|cccc@{}}
			\hline
			\multicolumn{6}{@{}c|}{RMSE}& 
			\multicolumn{5}{@{}c}{CP} \\ \hline
			$n$ & $\gamma$ & $\hat{\beta}_0$ & $\hat{\beta}_1$ & $\hat{\alpha}$ & $\hat{\gamma}$ & $\gamma$ & $\hat{\beta}_0$ & $\hat{\beta}_1$ & $\hat{\alpha}$ & $\hat{\gamma}$ \\ \hline
			{}     &-0.20&0.152&0.225&0.054&0.112         &-0.20&0.927&0.922&0.918&0.924 \\
			{60}  &0    &0.151&0.221&0.056&0.116          &0    &0.926&0.926&0.913&0.933 \\
			&0.20&0.141&0.201&0.062&0.129          &0.20 &0.923&0.920&0.909&0.938 \\
			\hline 
			
			{}    &-0.20&0.093&0.150&0.036&0.067        &-0.20&0.944&0.943&0.941&0.935\\
			{120} &0   &0.092&0.148&0.038&0.073         &0    &0.947&0.944&0.933&0.939 \\
			{}    &0.20 &0.086&0.134&0.042&0.083        &0.20 &0.946&0.940&0.934&0.940 \\
			\hline 
			
			{}    &-0.20&0.074&0.124&0.029&0.052         &-0.20&0.946&0.947&0.947&0.942 \\
			{180} &0   &0.074&0.123&0.030&0.058         &0    &0.945&0.944&0.940&0.944\\
			{}    &0.20 &0.069&0.111&0.033&0.066        &0.20 &0.944&0.940&0.939&0.943 \\ 
			\hline
		\end{tabular}
	\end{center}
\end{table}

\begin{table}[!htbp]
	\begin{center}
		\caption{Empirical mean and absolute bias of the MLEs under Scenario 2.}
                \vspace{0.2cm}
		\label{tabela1cenario2}
		\begin{tabular}{@{}c|c|cccc|c|cccc@{}}
			\hline
			\multicolumn{6}{@{}c|}{MLE}& 
			\multicolumn{5}{@{}c}{BIAS} \\ \hline
			$n$ & $\gamma$ & $\hat{\beta}_0$ & $\hat{\beta}_1$ & $\hat{\alpha}$ & $\hat{\gamma}$& $\gamma$ & $\hat{\beta}_0$ & $\hat{\beta}_1$ & $\hat{\alpha}$ & $\hat{\gamma}$ \\ \hline
			{}    &-0.20&0.509&0.502&0.488&-0.219        &-0.20&0.009&0.002&-0.012&-0.019\\
			{60}  &0    &0.500&0.506&0.485&-0.005        &0    &0.000&0.006&-0.015&-0.005\\
			{}    &0.20 &0.494&0.509&0.483&0.207         &0.20&-0.006&0.009&-0.017&0.007\\ \hline 
			
			{}    &-0.20&0.506&0.500&0.494&-0.210         &-0.20&0.006&0.000&-0.006&-0.010 \\
			{120}  &0   &0.501&0.502&0.493&-0.003         &0    &0.001&0.002&-0.007&-0.003 \\
			{}    &0.20 &0.498&0.504&0.492&0.204          &0.20 &-0.002&0.004&-0.008&0.004\\ \hline 
			
			{}    &-0.20&0.503&0.501&0.496&-0.207        &-0.20&0.003&0.001&-0.004&-0.007 \\
			{180}  &0    &0.501&0.502&0.495&-0.002        &0    &0.001&0.002&-0.005&-0.002\\
			{}    &0.20  &0.498&0.503&0.495&0.203         &0.20 &-0.002&0.003&-0.005&0.003 \\ \hline
		\end{tabular}
	\end{center}
\end{table}

\begin{table}[!htbp]
	\begin{center}
		\caption{Empirical root mean square error (RMSE) and coverage probability (CP) of the MLEs under Scenario 2.}
                \vspace{0.2cm}
		\label{tabela2cenario2}
		\begin{tabular}{@{}c|c|cccc|c|cccc@{}}
			\hline
			\multicolumn{6}{@{}c|}{RMSE}& 
			\multicolumn{5}{@{}c}{CP} \\ \hline
			$n$ & $\gamma$ & $\hat{\beta}_0$ & $\hat{\beta}_1$ & $\hat{\alpha}$ & $\hat{\gamma}$ & $\gamma$ & $\hat{\beta}_0$ & $\hat{\beta}_1$ & $\hat{\alpha}$ & $\hat{\gamma}$ \\ \hline
			{}    &-0.20&0.085&0.032&0.054&0.109         &-0.20&0.936&0.932&0.918&0.919 \\
			{60}  &0    &0.086&0.032&0.056&0.115         &0    &0.936&0.918&0.913&0.931\\
			{}    &0.20  &0.085&0.031&0.062&0.127        &0.20  &0.933&0.896&0.913&0.937 \\ \hline 
			
			{}    &-0.20&0.059&0.024&0.036&0.067          &-0.20&0.948&0.945&0.940&0.931\\
			{120} &0    &0.059&0.024&0.038&0.073          &0    &0.947&0.936&0.936&0.936\\
			{}    &0.20 &0.058&0.022&0.042&0.083          &0.20 &0.944&0.926&0.933&0.942\\ \hline 
			
			{}    &-0.20&0.047&0.019&0.028&0.052          &-0.20&0.949&0.940&0.946&0.941\\
			{180} &0    &0.047&0.017&0.031&0.058          &0    &0.945&0.939&0.943&0.945\\
			{}    &0.20 &0.046&0.018&0.033&0.066          &0.20 &0.948&0.930&0.940&0.946\\ \hline 
		\end{tabular}
	\end{center}
\end{table} 

\begin{table}[!htbp]
	\begin{center}
		\caption{Empirical mean and absolute bias of the MLEs under Scenario 3.}
                \vspace{0.2cm}
		\label{tabela1cenario3}
		\begin{tabular}{@{}c|c|cccc|c|cccc@{}}
			\hline
			\multicolumn{6}{@{}c|}{MLE}& 
			\multicolumn{5}{@{}c}{BIAS} \\ \hline
			$n$ & $\gamma$ & $\hat{\beta}_0$ & $\hat{\beta}_1$ & $\hat{\alpha}$ & $\hat{\gamma}$ & $\gamma$ & $\hat{\beta}_0$ & $\hat{\beta}_1$ & $\hat{\alpha}$ & $\hat{\gamma}$ \\ \hline
			{}    &-0.20&1.041&1.968&1.125&-0.218         &-0.20&0.041&-0.032&-0.075&-0.018 \\
			{60}  &0    &1.018&2.000&1.120&-0.009       &0    &0.018&0.000&-0.080&-0.009 \\
			{}    &0.20 &1.001&2.023&1.118&0.199         &0.20 &0.001&0.023&-0.082&-0.001 \\ \hline 
			{}    &-0.20&1.010&1.997&1.141&-0.203        &-0.20&0.010&-0.003&-0.059&-0.003 \\
			{120}  &0   &1.003&2.009&1.140&-0.003      &0   &0.003&0.009&-0.060&-0.003 \\
			{}    &0.20 &0.998&2.017&1.141&0.196        &0.20&-0.002&0.017&-0.059&-0.004 \\ \hline 
			{}    &-0.20&1.001&2.004&1.147&-0.200        &-0.20&0.001&0.004&-0.053&0.000 \\
			{180}  &0   &0.999&2.009&1.147&-0.003       &0   &-0.001&0.009&-0.053&-0.003 \\
			{}    &0.20 &0.999&2.012&1.149&0.194        &0.20 &-0.001&0.012&-0.051&-0.006\\ \hline 
		\end{tabular}
	\end{center}
\end{table}

\begin{table}[!htbp]
	\begin{center}
		\caption{Empirical root mean square error (RMSE) and coverage probability (CP) of the MLEs under Scenario 3.}
        \vspace{0.2cm}
		\label{tabela2cenario3}
		\begin{tabular}{@{}c|c|cccc|c|cccc@{}}
			\hline
			\multicolumn{6}{@{}c|}{RMSE}& 
			\multicolumn{5}{@{}c}{CP} \\ \hline
			$n$ & $\gamma$ & $\hat{\beta}_0$ & $\hat{\beta}_1$ & $\hat{\alpha}$ & $\hat{\gamma}$ & $\gamma$ & $\hat{\beta}_0$ & $\hat{\beta}_1$ & $\hat{\alpha}$ & $\hat{\gamma}$ \\ \hline
			{}    &-0.20&0.326&0.500&0.144&0.126         &-0.20&0.916&0.918&0.856&0.923 \\
			{60}  &0    &0.328&0.505&0.150&0.132         &0    &0.925&0.927&0.857&0.925\\
			{}    &0.20  &0.312&0.469&0.159&0.142        &0.20  &0.922&0.921&0.863&0.930 \\ \hline 
			
			{}    &-0.20&0.214&0.335&0.103&0.074          &-0.20&0.940&0.939&0.856&0.946\\
			{120} &0    &0.218&0.342&0.107&0.080         &0    &0.940&0.938&0.860&0.942\\
			{}    &0.20 &0.207&0.318&0.113&0.089         &0.20 &0.939&0.934&0.877&0.942\\ \hline 
			
			{}    &-0.20&0.166&0.268&0.087&0.057          &-0.20&0.941&0.938&0.845&0.946\\
			{180} &0    &0.171&0.277&0.089&0.064          &0    &0.939&0.935&0.856&0.944\\
			{}    &0.20 &0.164&0.256&0.094&0.071         &0.20 &0.940&0.929&0.873&0.946\\ \hline 
		\end{tabular}
	\end{center}
\end{table} 

\section{Diagnostic analysis}\label{influencialocal}

\subsection{Local influence}\label{secavainflocal}
The methodology of local influence, proposed by \cite{cook1986}, is a diagnostic technique that enables to carefully detect the existence of points that, when submitted to small disturbances, cause disproportionate variation in the inferential results obtained according to a postulated model.  It is based on the geometric concept of curvature to investigate the local behavior of the offset-by-likelihood graph.
Let $\ell({\mbox{\boldmath{$\theta$}}})$ be the log-likelihood function associated with the postulated statistical model, where ${\mbox{\boldmath{$\theta$}}}=(\theta_1,...,\theta_p)^{\top}\in\Theta \subset {\R}^p$ is a $p\times 1$ vector of unknown parameters. Once the perturbation scheme is fixed, let $\ell({\mbox{\boldmath{$\theta$}}}|{\mbox{\boldmath{$\omega$}}})$ be the log-likelihood function associated with the statistical model perturbed by a vector  ${\mbox{\boldmath{$\omega$}}}\in U$. Here, it is assumed that $\ell({\mbox{\boldmath{$\theta$}}}|{\mbox{\boldmath{$\omega$}}})$ is differentiable in $({\mbox{\boldmath{$\theta$}}}^{\top},{\mbox{\boldmath{$\omega$}}}^{\top})$, that is, it has partial derivatives of all orders and they are continuous. Furthermore, it is assumed that a vector  ${\mbox{\boldmath{$\omega$}}}_0$ exists, such that   $\ell({\mbox{\boldmath{$\theta$}}})=\ell({\mbox{\boldmath{$\theta$}}}|{\mbox{\boldmath{$\omega$}}}_0)$ for all  ${\mbox{\boldmath{$\theta$}}}\in\Theta$, called non-disturbance vector, and we assume $q=n$. Let ${\mbox{\boldmath{$\hat{\theta}$}}}$ and ${{\mbox{\boldmath{$\hat{\theta}$}}}}_{{\mbox{\boldmath{$\omega$}}}}$ be the maximum likelihood estimators for ${\mbox{\boldmath{$\theta$}}}$ according to the postulated and perturbed models, respectively. In order to evaluate the influence when ${\mbox{\boldmath{$\omega$}}}$ varies in $\U$, \cite{cook1986} considers the likelihood displacement function defined by
\begin{equation}
g:U \longmapsto {\R}, \quad g({\mbox{\boldmath{$\omega$}}})= 2[\ell({\mbox{\boldmath{$\hat{\theta}$}}})-\ell({{\mbox{\boldmath{$\hat{\theta}$}}}}_{\mbox{\boldmath{$\omega$}}})].
\label{f:afastamento}
\end{equation}Then, the author mentions that the graph of $g({\mbox{\boldmath{$\omega$}}})$ versus ${\mbox{\boldmath{$\omega$}}}$ has relevant information about the perturbation scheme used, and highlights the importance of identifying it as a surface  $G\subset\R^{n+1}$ formed by the points
\begin{equation}
{\mbox{\boldmath{$\phi$}}}({\mbox{\boldmath{$\omega$}}})=
\left(
\begin{array}{c}
{\mbox{\boldmath{$\omega$}}}\\
g({\mbox{\boldmath{$\omega$}}})
\end{array}
\right), \quad {\mbox{\boldmath{$\omega$}}}\in U,
\label{s:graficodeinfluencia}
\end{equation} calling it an {\it influence graph}.

An operational formula, deduced by \cite{cook1986}, for calculating the normal curvature of $G$  at a critical critical point $\ommega_0$ of the likelihood displacement function is the following: 
\begin{equation}
\left. C_{\ele}=-2\ele^{\top}(\Deltta^{\top}\ddL^{-1}\Deltta)\ele \right|_{\ttheta=\hatttheta,\ommega=\ommega_0},
\label{curvaturanormaloperacional}
\end{equation}
where $\Deltta=(\Deltta_{ij})_{p\times n}$ is the matrix whose terms are given by
\begin{equation}
\Deltta_{ij}=\frac{\partial^2 \ell(\ttheta | \ \ommega)}{\partial \theta_i \partial \omega_j},\  i=1,...,p \mbox{ e } \ j=1,2,...,n
\label{termosmatrizdelta}
\end{equation}
evaluated at  $\ttheta=\hatttheta$ and $\ommega=\ommega_0$.  $\ddL=(\ddL_{ij})_{p\times p}$ is the matrix whose entries are given by
\begin{equation}
\ddL_{ij}=\frac{\partial^2 \ell (\ttheta | \ \ommega)}{\partial \theta_i \partial \theta_j},\  i=1,...,p \mbox{ and } \ j=1,...,p 
\label{termosmatrizddL}
\end{equation}
evaluated at  $\ttheta=\hatttheta$ and $\ommega=\ommega_0$.

Based on \cite{cook1986}, there are several ways in which the normal curvature, given by equation (\ref{curvaturanormaloperacional}), can be used to investigate the local behavior of the graphical surface of influence. In particular, he mentions that we can proceed with the analysis, paying attention to the direction in which the  normal curvature is maximum, which corresponds to the eigenvector of $-\ddot{\F}=\Deltta^{\top}(-\ddL^{-1})\Deltta$, associated with the highest eigenvalue in absolute value. \cite{cook1986} argues that the eigenvector $\ele_{\max}$ associated with $$\lambda_{\max}=\max_{\ele}{C_{\ele}}$$ indicates how to perturb the postulated model in order to obtain large local changes in the likelihood displacement function and suggests that we investigate the potential cause of this instability.

According to \cite{poon1999conformal}, the normal curvature presents some slight inconveniences when we aim at conducting a diagnostic analysis. Because, since the $\lambda_{\max}$ can assume an arbitrarily large value, the curvature will assume any real value inferior than or equal to the  $\lambda_{\max}$, causing a loss of objectivity in judging the magnitude of this curvature. Furthermore, they argue that the normal curvature is not invariant by the scale change imposed on the perturbation scheme. In order to solve these inconveniences and, consequently, to improve the technique of local influence, \cite{poon1999conformal} proposed the use of the normal curvature according to  $B_{\ele}$, which assumes values in the interval $[0,1]$, enabling the construction of reference values to judge the magnitude of $B_{\ele}$ more objectively, and it is invariant under conformal reparametrization.

At a critical point $\omega_0$ of $g$, the direction of the unit vector $\ele$ is determined using the matrices $\Delta$ and ${\ddL}$, defined in equations \eqref{termosmatrizdelta} and \eqref{termosmatrizddL} respectively, through the following operational formula for $B_{\ele}$:
\begin{equation}
\left.B_{\ele}=-\frac{\ele^{\top}(\Deltta^{\top}\ddL^{-1}\Deltta)\ele}{\sqrt{\mathrm{tr}[(\Deltta^{\top}\ddL^{-1}\Deltta)^2]}}\right|_{\ttheta=\hatttheta,\ommega=\ommega_0}.
\label{cnconformeoperacional}
\end{equation}

Let us consider a direction  $\mathbf{v}_i\in \mathcal{F}=\{\mathbf{v}_1,\mathbf{v}_2,...,\mathbf{v}_n\}$, which is an orthonormal basis formed by the eigenvectors of $\ddot{\F}$ whose associated eigenvalues are $\lambda_1,...,\lambda_n.$, it follows that
\begin{equation}
B_{\mathbf{v}_i}=\frac{1}{\sqrt{\sum_{k=1}^{n}\lambda_{k}^{2}}}\frac{\mathbf{v}_{i}^{\top}\lambda_i\mathbf{v}_i}{\mathbf{v}_{i}^{\top}\mathbf{v}_i}= \frac{\lambda_i}{\sqrt{\sum_{k=1}^{n}\lambda_{k}^{2}}}=\lambda_{i}^{\ast}.
\label{cnconautovetor}
\end{equation}
This implies that if  $B_{\mathbf{v}_i}=B$ for every eigenvector $\mathbf{v}_i\in \mathcal{F}$, then $B=\frac{1}{\sqrt{n}}$. If we use  $\frac{1}{\sqrt{n}}$in the construction of reference values for the curvatures associated with the eigenvectors, we make our decision to judge the magnitude of  $B_{\mathbf{v}_i}$ more objectively. In this context, \cite{poon1999conformal} establish the following criteria: given an integer $q$, with $1\leq q <\sqrt{n}$,  an eigenvector $\mathbf{v}$ is said to be  $q-$influential  if $|B_{\mathbf{v}}|\geq \frac{q}{\sqrt{n}}$. Equivalently, the direction $\mathbf{v}_i$ is said to be $q-$influential if $\lambda_{i}^{\ast}\geq \frac{q}{\sqrt{n}}$. Given the existence of a positive integer $k$ such that 
$\lambda_{1}^{\ast}\geq \cdots \geq \lambda_{k}^{\ast}\geq \frac{q}{\sqrt{n}}>\lambda_{k+1}^{\ast}\geq \cdots \geq \lambda_{n}^{\ast}
$, twe have exactly $k$ eigenvectors that are $q$-influential. 

With the purpose of determining which are the basic perturbation vectors that contribute for the directions associated with the $q$-influential eigenvectors to be investigated in the analysis, \cite{poonpoon2010} defined the aggregate contribution of the $j$-th basic perturbation vector by
\begin{equation}
B_{j}(q)=\sum_{i=1}^{k}{\lambda_{i}^{\ast}}a_{ji}^{2},
\label{contribuicaoagregadaq}
\end{equation}
where the positive integer $k$ is the quantity of $q$-influential eigenvectors and $(a_{1i},a_{2i},...,a_{ni})$ are the coordinates of $\mathbf{v}_i$ in the canonical basis of $\R^n   $. Hence, if $B_{j}(q)=b(q)$ for every basic perturbation vector  $\mathbf{e}_j$, then
$b(q)=\frac{1}{n}\sum_{i=1}^{k}\lambda_{i}^{\ast}$. \cite{poonpoon2010} propose the use of $b(q)$ to construct reference values
to judge the contribution of each of the basic perturbation vectors in the determination of the  $q$-influential eigenvectors.

\subsection{Perturbation scheme}
In this subsection, we derive the curvatures for some common perturbation schemes in either the model or the data.

\paragraph{Weights perturbation} Let us consider the linear regression model, defined in (\ref{mrlevbs}), the variables $Y_1,...,Y_n$ are independent and, for each $i=1,...,n$, $Y_i\sim \mbox{log-EVBS}(\alpha,\mathbf{x}_{i}^{\top}{\mbox{\boldmath{$\beta$}}}, \gamma)$. We will admit some assumptions that enable the estimation process and add properties to the MLE of the vector $\ttheta=(\bbeta^{\top}, \alpha, \gamma)^{\top}$. The first hypothesis is $\gamma>-1$, this is so that the MLE consistency hypothesis makes sense. The second is a condition imposed on the parameter $\alpha$ so that we do not have multiplicity of solutions for the system of maximum likelihood equations, more precisely, we will assume that $\alpha<2$. Finally, with the purpose that the Fisher information matrix assumes finite values, we will assume the auxiliary hypothesis $\gamma<1/4$. Our actions will address two distinct cases. Initially, we will deal with the situation where $\gamma=0$. Next, we will deal with the general case where $\gamma\neq 0$.  

\begin{enumerate}
	\item  $\gamma = 0$. In this case, the vector of unknown parameters is $\ttheta=(\bbeta^{\top}, \alpha)^{\top}$ and, for estimation purposes, given a sample  $\mathbf{y}=(y_1,...,y_n)^{\top}$ of observations of the vector $\mathbf{Y}=(Y_1,...,Y_n)^{\top}$, the perturbed log-likelihood function of $\ttheta$ corresponds to the case weighting scheme, and it can be written in the form
	\begin{equation}
	l(\ttheta | \ \ommega)=(-\log 2)\sum_{i=1}^{n}\omega_i +\sum_{i=1}^{n}\omega_i\log(\xi_{i1})
	-\sum_{i=1}^{n}\omega_i\xi_{i2}-\sum_{i=1}^{n}\omega_i\exp(-\xi_{i2}),
	\label{flvppc}
	\end{equation}
	where $\xi_{i1}=\frac{2}{\alpha }\cosh \left(\frac{y_i-\mathbf{x}_{i}^{\top}\bbeta}{2}\right)$, $\xi_{i2}=\frac{2}{\alpha}$$\senh\left(\frac{y_i-\mathbf{x}_{i}^{\top}\bbeta}{2}\right)$ and $\ommega=(\omega_1,...,\omega_n)^{\top}\in \Omega\subset \R^{n}$.
	
	Therefore, the matrix $\Deltta=(\Deltta_{ij})_{(p+1)\times n}$; $\Deltta_{ij}=\frac{\partial^2 l(\ttheta | \ \ommega)}{\partial \theta_i \partial \omega_j}$ $i=0,1,...,p$ e $j=1,2,...,n,$  evaluated at $\ttheta=\hatttheta$ and $\ommega_0=(1,1,...,1)^{\top}\in \R^n $, is given by $\Deltta=(\Deltta_{\bbeta},\Deltta_{\alpha})^{\top}$ whose entries are 
	\begin{equation}
	\Deltta_{\bbeta}=\X^{\top} \mbox{diag}\{\hat{b}_{01},\hat{b}_{02},...,\hat{b}_{0n}\}
	\end{equation}
	where $\hat{b}_{0j}=\frac{1}{2}\left(\hat{\xi}_{j1}(1-\exp(-\hat{\xi}_{j2}))-\frac{\hat{\xi}_{j2}}{\hat{\xi}_{j1}}\right),\ j=1,2,...,n.$ Here, 
    \begin{equation}
	\Deltta_{\alpha}=\mathbf{c}_{0}^{\top}=(\hat{c}_{01},\hat{c}_{02},...,\hat{c}_{0n}), 
	\end{equation}
	with $\hat{c}_{0j}=-\frac{1}{\hat{\alpha}}\left(1-\hat{\xi}_{j2}(1-\exp(-\hat{\xi}_{j2}))\right)$ , for $j=1,2,...,n$. 
	
	\item $\gamma \neq 0$ . Given a sample  $\mathbf{y}=(y_1,...,y_n)^{\top}$ of observations of the vector $\mathbf{Y}=(Y_1,...,Y_n)^{\top}$, the perturbed log-likelihood function of $\ttheta=(\bbeta^{\top},\alpha,\gamma)^{\top}$ corresponds to the scheme of case weighting, and it can be written in the form
	\begin{equation}
	l(\ttheta | \ \ommega)=-\log 2\sum_{i=1}^{n}\omega_i +\sum_{i=1}^{n}\omega_i\log(\xi_{i1})
	-(1+{1}/{\gamma})\sum_{i=1}^{n}\omega_i\log(1+\gamma \xi_{i2})-\sum_{i=1}^{n}\omega_i(1+\gamma\xi_{i2})^{-1/\gamma},
	\label{flvppc2}
	\end{equation}
	
	Consequently, the matrix $\Deltta=(\Deltta_{ij})_{(p+2)\times n}$; $\Deltta_{ij}=\frac{\partial^2 l(\ttheta | \ \ommega)}{\partial \theta_i \partial \omega_j}$ $i=0,1,...,p+1$ e $j=1,2,...,n$ , evaluated at $\ttheta=\hatttheta$ and $\ommega_0=(1,1,...,1)^{\top}\in \R^n $, is given by $\Deltta=(\Deltta_{\bbeta},\Deltta_{\alpha},\Deltta_{\gamma})^{\top}$ whose entries are given by:
	\begin{equation}
	\Deltta_{\bbeta}=\X^{\top} \mbox{diag}\{\hat{b}_1,\hat{b}_2,...,\hat{b}_n\}
	\end{equation}
	where $\hat{b}_{j}=\frac{1}{2}\left[\frac{\hat{\xi}_{j1}}{1+\hat{\gamma}\hat{\xi}_{j2}}\left(\hat{\gamma}+1-(1+\hat{\gamma}\hat{\xi}_{j2})^{-\frac{1}{\hat{\gamma}}}\right)-\frac{\hat{\xi}_{j2}}{\hat{\xi}_{j1}}\right],\ j=1,2,...,n$.
	
	\begin{equation}
	\Deltta_{\alpha}=\mathbf{c}^{\top}=(\hat{c}_{1},\hat{c}_{2},...,\hat{c}_{n}), 
	\end{equation}
	with $\hat{c}_{j}=-\frac{1}{\hat{\alpha}}+\frac{1}{\hat{\alpha}}\frac{\hat{\xi}_{j2}}{(1+\hat{\gamma}\hat{\xi}_{j2})}\left( \hat{\gamma}+1 - (1+\hat{\gamma}\hat{\xi}_{j2})^{-\frac{1}{\hat{\gamma}}}\right)$, for $j=1,2,...,n$. And,
	
	\begin{equation}
	\Deltta_{\gamma}=\mathbf{d}^{\top}=(\hat{d}_{1},\hat{d}_{2},...,\hat{d}_{n}), 
	\end{equation}
	where $\hat{d}_{j}=\frac{1}{\hat{\gamma}^2}\log(1+\hat{\gamma}\hat{\xi}_{j2})\left(1-(1+\hat{\gamma}\hat{\xi}_{j2})^{-\frac{1}{\hat{\gamma}}}\right)-\frac{1}{\hat{\gamma}} \frac{\hat{\xi}_{j2}}{(1+\hat{\gamma}\hat{\xi}_{j2})}\left( \hat{\gamma}+1 - (1+\hat{\gamma}\hat{\xi}_{j2})^{-\frac{1}{\hat{\gamma}}}\right)$, for $j=1,2,...,n$.
	
\end{enumerate}

\paragraph{Response perturbation} In this case, it is imposed on the answer $y_i$  a perturbation of the form $y_{i\omega}=y_i+\omega_is_{y}$, in which $\omega_{i} \in \R$ for all $i=1,2,...,n$ and $s_y$ is a scale factor. The exposition that we will do in this subsection follows the same steps as the previous one with similar assumptions. 
\begin{enumerate}
	\item $\gamma=0$. When considering the linear regression model defined in (\ref{mrlevbs}) and the type of perturbation described previously, the perturbed log-likelihood function of $\ttheta=(\bbeta^{\top},\alpha)^{\top}$, given a sample  $\mathbf{y}=(y_1,...,y_n)^{\top}$ of observations, is given by
	\begin{equation}
	l(\ttheta | \ \ommega)=-n\log 2 +\sum_{i=1}^{n}\log(\xi_{i1\omega_r})
	-\sum_{i=1}^{n}\xi_{i2\omega_r}-\sum_{i=1}^{n}\exp(-\xi_{i2\omega_r}),
	\label{flvpresp}
	\end{equation}
	where $\xi_{i1\omega_r}=\frac{2}{\alpha }\cosh \left(\frac{y_i+\omega_{i}s_y-\mathbf{x}_{i}^{\top}\bbeta}{2}\right)$ and
	$\xi_{i2\omega_r}=\frac{2}{\alpha}\senh\left(\frac{y_i+\omega_{i}s_y-\mathbf{x}_{i}^{\top}\bbeta}{2}\right)$ . 
	
	Thus, when we perturb the response $y_i$ as in the described scheme, the matrix  $\Deltta=(\Deltta_{ij})_{(p+1)\times n}$; $\Deltta_{ij}=\frac{\partial^2 l(\ttheta | \ \ommega)}{\partial \theta_i \partial \omega_j}$ $i=0,1,...,p$ and $j=1,2,...,n$ , evaluated at $\ttheta=\hatttheta$ and $\ommega=\ommega_0=(0,0,...,0)^{\top}\in \R^n $, is given by $\Deltta=(\Deltta_{\bbeta},\Deltta_{\alpha})^{\top}$ whose entries are 
	\begin{equation}
	\Deltta_{\bbeta}=\X^{\top} \mbox{diag}\{\hat{m}_{01},\hat{m}_{02},...,\hat{m}_{0n}\}
	\end{equation}
	where $\hat{m}_{0j}= \frac{1}{4}s_y\left[1-\left( \frac{\hat{\xi}_{j2\omega_r}}{\hat{\xi}_{j1\omega_r}}\right)^2-\hat{\xi}_{j2\omega_r}(1-\exp(-\xi_{j2\omega_r})) -\hat{\xi}_{j1\omega_r}^{2} \exp(-\hat{\xi}_{j2\omega_r}) \right]$, for $j=1,2,...,n$. Moreover,
	
	\begin{equation}
	\Deltta_{\alpha}=\mathbf{z}_{0}^{\top}=(\hat{z}_{01},\hat{z}_{02},...,\hat{z}_{0n}), 
	\end{equation}
	
	with $\hat{z}_{0j}= -\frac{1}{2\hat{\alpha}}s_y\hat{\xi}_{j1\omega_r}[1-\exp(-\hat{\xi}_{j2\omega_r})(1-\hat{\xi}_{j2\omega_r})]  $, for $j=1,2,...,n$. 
	
	\item $\gamma \neq 0$.  Given a sample  $\mathbf{y}=(y_1,...,y_n)^{\top}$ of realizations of the vector $\mathbf{Y}=(Y_1,...,Y_n)^{\top}$, the perturbed log-likelihood function of $\ttheta=(\bbeta^{\top},\alpha,\gamma)^{\top}$ corresponding to the perturbation scheme in the response is
	\begin{equation}
	l(\ttheta | \ \ommega)=-n\log 2+\sum_{i=1}^{n}\log(\xi_{i1\omega_r})
	-(1+{1}/{\gamma})\sum_{i=1}^{n}\log(1+\gamma \xi_{i2\omega_r})-\sum_{i=1}^{n}(1+\gamma\xi_{i2\omega_r})^{-1/\gamma},
	\label{flvppres}
	\end{equation}
	where $\xi_{i1\omega_r}=\frac{2}{\alpha }\cosh \left(\frac{y_i+\omega_is_y-\mathbf{x}_{i}^{\top}\bbeta}{2}\right)$ and $\xi_{i2\omega_r}=\frac{2}{\alpha}$$\senh\left(\frac{y_i+\omega_is_y-\mathbf{x}_{i}^{\top}\bbeta}{2}\right)$ .
	
	Therefore, the matrix $\Deltta=(\Deltta_{ij})_{(p+2)\times n}$; $\Deltta_{ij}=\frac{\partial^2 l(\ttheta | \ \ommega)}{\partial \theta_i \partial \omega_j}$ $i=0,1,...,p+1$ e $j=1,2,...,n$, evaluated at  $\ttheta=\hatttheta$ and $\ommega=\ommega_0=(0,0,...,0)^{\top}\in \R^n $, is given by $\Deltta=(\Deltta_{\bbeta},\Deltta_{\alpha},\Deltta_{\gamma})^{\top}$ whose entries are:
	\begin{equation}
	\Deltta_{\bbeta}=\X^{\top}\mbox{diag}\{\hat{m}_1,\hat{m}_2,...,\hat{m}_n\},
	\end{equation}
	where
	\begin{eqnarray*}
		\hat{m}_{j}&=&
		\frac{1}{4}s_y[1-\frac{\hat{\xi}_{j1\omega_r}^{2}}{\hat{\xi}_{j2\omega_r}^{2}}-\frac{\hat{\xi}_{j2\omega_r}}{1+\hat{\gamma}\hat{\xi}_{j2\omega_r}}(1+\hat{\gamma}-(1+\hat{\gamma}\hat{\xi}_{j2\omega_r})^{-\frac{1}{\hat{\gamma}}})+\\
		&+&(1+\hat{\gamma})\left(\frac{\hat{\xi}_{j1\omega_r}}{1+\hat{\gamma}\hat{\xi}_{j2\omega_r}}\right)^2(\hat{\gamma} -(1+\hat{\gamma}\hat{\xi}_{j2\omega_r})^{-\frac{1}{\hat{\gamma}}})], \ \text{for} \ j=1,...,n.
	\end{eqnarray*} 
	
	\begin{equation}
	\Deltta_{\alpha}=\mathbf{z}^\top=(\hat{z}_1,...,\hat{z}_n), 
	\end{equation}
	with 
	\begin{eqnarray*}
		\hat{z}_j&=& -\frac{1}{2\hat{\alpha}}s_y\left( \frac{\hat{\xi}_{j1\omega_r}}{1+\hat{\gamma}\hat{\xi}_{j2\omega_r}}\right)\left(1+\hat{\gamma}-(1+\hat{\gamma}\hat{\xi}_{j2\omega_r})^{-\frac{1}{\hat{\gamma}}}\right)+\\
		{}&+&\frac{1}{2\hat{\alpha}}s_y(1+\hat{\gamma})\left(\frac{\hat{\xi}_{j1\omega_r}}{1+\hat{\gamma}\hat{\xi}_{j2\omega_r}}\right)\left(\frac{\hat{\xi}_{j2\omega_r}}{1+\hat{\gamma}\hat{\xi}_{j2\omega_r}}\right)\left(\hat{\gamma}-(1+\hat{\gamma}\hat{\xi}_{j2\omega_r})^{-\frac{1}{\hat{\gamma}}}\right),
	\end{eqnarray*}
	$j=1,2,...,n$. Finally, we have
	
	\begin{equation}
	\Deltta_{\gamma}=\mathbf{v}^\top=(\hat{v}_1,...,\hat{v}_n),
	\end{equation}
	where
	\begin{eqnarray*}
		\hat{v}_j&=& \frac{1}{2}s_y\left( \frac{\hat{\xi}_{j1\omega_r}}{1+\hat{\gamma}\hat{\xi}_{j2\omega_r}}\right)\left(1-\frac{1}{\hat{\gamma}^2}\log(1+\hat{\gamma}\hat{\xi}_{j2\omega_r})\right)+\\
		&-&\frac{1}{2}s_y(1+\hat{\gamma})\left(\frac{\hat{\xi}_{j1\omega_r}}{1+\hat{\gamma}\hat{\xi}_{j2\omega_r}}\right)\left(\frac{\hat{\xi}_{j2\omega_r}}{1+\hat{\gamma}\hat{\xi}_{j2\omega_r}}\right)\left(1-\frac{1}{\hat{\gamma}}(1+\hat{\gamma}\hat{\xi}_{j2\omega_r})^{-\frac{1}{\hat{\gamma}}}\right),
	\end{eqnarray*}
	for  $j=1,2,...,n$.
\end{enumerate}

\paragraph{Explanatory variable perturbation}

Let us assume that in the model defined by equation (\ref{mrlevbs}) there is a continuous control parameter denoted by $\mathbf{x}_t$. We impose over $\mathbf{x}_t$ an additive perturbation of the form $x_{it\omega}=x_{it}+\omega_{i}s_x$, where  $t\in \{0,1,...,p-1\}$, and $s_x$ is a scale factor, e.g. the corresponding standard deviation to the observations regarding the variable $\mathbf{x}_t$ . Next, we will obtain the expressions that allow the calculation of curvatures according to this scheme of perturbation. 
\begin{enumerate}
	\item $\gamma=0$. By imposing a perturbation on a single continuous explanatory variable, according to the scheme described above, the perturbed log-likelihood function of $\ttheta=(\bbeta^{\top},\alpha)^{\top}$, given a sample  $\mathbf{y}=(y_1,...,y_n)^{\top}$ of observations, is   
	\begin{equation}
	l(\ttheta | \ \ommega)=-n\log 2 +\sum_{i=1}^{n}\log(\xi_{i1\omega_c})
	-\sum_{i=1}^{n}\xi_{i2\omega_c}-\sum_{i=1}^{n}\exp(-\xi_{i2\omega_c}),
	\label{flvpcov}
	\end{equation}
	where $\xi_{i1\omega_c}=\frac{2}{\alpha }\cosh \left(\frac{y_i-\mathbf{x}_{i}^{\top}\bbeta-\beta_i\omega_is_x}{2}\right)$ and
	$\xi_{i2\omega_c}=\frac{2}{\alpha}\senh\left(\frac{y_i-\mathbf{x}_{i}^{\top}\bbeta-\beta_t\omega_is_x}{2}\right)$.  
	
	Hence, the matrix $\Deltta=(\Deltta_{ij})_{(p+1)\times n}$; $\Deltta_{ij}=\frac{\partial^2 l(\ttheta | \ \ommega)}{\partial \theta_i \partial \omega_j}$ $i=0,1,...,p$ and $j=1,2,...,n,$ corresponding to the
	perturbation scheme on the continuous covariate $\mathbf{x}_t$, evaluated at $\ttheta=\hatttheta$ and $\ommega=\ommega_0=(0,0,...,0)^{\top}\in \R^n $, is 
	$\Deltta=(\Deltta_{\bbeta},\Deltta_{\alpha})^{\top}$ whose entries are
	\begin{equation}
	\Deltta_{\bbeta}=\X^{\top} \mbox{diag}\{\hat{\kappa}_{01},\hat{\kappa}_{02},...,\hat{\kappa}_{0n}\}
	\end{equation}
	So, if $i\neq t$, then 
	\begin{eqnarray*} 
		\hat{\kappa}_{0j}&=& \frac{1}{4}s_x\hat{\beta}_t\left[1-\left( \frac{\hat{\xi}_{j2\omega_c}}{\hat{\xi}_{j1\omega_c}}\right)^2-\hat{\xi}_{j2\omega_c}(1-\exp(-\hat{\xi}_{j2\omega_c})) -\hat{\xi}_{j1\omega_c}^{2} \exp(-\hat{\xi}_{j2\omega_c}) \right],
	\end{eqnarray*}for $j=1,2,...,n$. On the other hand, if  $i=t$, then
	\begin{eqnarray*} 
		\hat{\kappa}_{0j}&=& \frac{1}{4}s_x\hat{\beta_t}\left[1-\left( \frac{\hat{\xi}_{j2\omega_c}}{\hat{\xi}_{j1\omega_c}}\right)^2-\hat{\xi}_{j2\omega_c}(1-\exp(-\hat{\xi}_{j2\omega_c})) -\hat{\xi}_{j1\omega_c}^{2} \exp(-\hat{\xi}_{j2\omega_c}) \right]\! \!+\\
		&+&\left(-\frac{1}{2}\right)  s_x\left[\frac{\hat{\xi}_{j2\omega_c}}{\hat{\xi}_{j2\omega_c}}-\hat{\xi}_{j1\omega_c}(1-\exp(-\hat{\xi}_{j2\omega_c}))\right],
	\end{eqnarray*}
	for $j=1,2,...,n,$ and,
	\begin{equation}
	\Deltta_{\alpha}=\mathbf{\tau}_{0}^{\top}=(\hat{\tau}_{01},\hat{\tau}_{02},...,\hat{\tau}_{0n}), 
	\end{equation}
	
	with $\hat{\tau}_{0j}= -\frac{1}{2\hat{\alpha}}s_x\hat{\xi}_{j1\omega_c}[1-\exp(-\hat{\xi}_{j2\omega_c})(1-\hat{\xi}_{j2\omega_c})]$, for $j=1,2,...,n$.
	
	\item $\gamma \neq 0$.  The perturbed log-likelihood function of $\ttheta=(\bbeta^{\top},\alpha,\gamma)^{\top}$ corresponding to the perturbation scheme on an explanatory variable is
	\begin{equation}
	l(\ttheta | \ \ommega)=-n\log 2+\sum_{i=1}^{n}\log(\xi_{i1\omega_c})
	-(1+{1}/{\gamma})\sum_{i=1}^{n}\log(1+\gamma \xi_{i2\omega_c})-\sum_{i=1}^{n}(1+\gamma\xi_{i2\omega_c})^{-1/\gamma},
	\label{flvppcov}
	\end{equation}
	where $\xi_{i1\omega_c}=\frac{2}{\alpha }\cosh \left(\frac{y_i-\mathbf{x}_{i}^{\top}\bbeta -\beta_t\omega_is_x}{2}\right)$, $\xi_{i2\omega_c}=\frac{2}{\alpha}$$\senh\left(\frac{y_i-\mathbf{x}_{i}^{\top}\bbeta-\beta_t\omega_is_x}{2}\right)$.
	
	Hence, the matrix $\Deltta=(\Deltta_{ij})_{(p+2)\times n}$; $\Deltta_{ij}=\frac{\partial^2 l(\ttheta | \ \ommega)}{\partial \theta_i \partial \omega_j}$ $i=0,1,...,p+1$ e $j=1,2,...,n,$  evaluated at $\ttheta=\hatttheta$ and$\ommega=\ommega_0=(0,0,...,0)^{\top}\in \R^n $, is  $$\Deltta=(\Deltta_{\bbeta},\Deltta_{\alpha},\Deltta_{\gamma})^{\top},$$ whose entries are
	\begin{equation}
	\Deltta_{\bbeta}=\X^{\top}\mbox{diag}\{\hat{\kappa}_1,\hat{\kappa}_2,...,\hat{\kappa}_n\}.
	\end{equation}
	
	So, if $i\neq t$, then
	\begin{eqnarray*}
		\hat{\kappa}_j&=&
		\frac{1}{4}s_x\hat{\beta}_t[1-\frac{\hat{\xi}_{j1\omega_c}^{2}}{\hat{\xi}_{j2\omega_c}^{2}}-\frac{\hat{\xi}_{j2\omega_c}}{1+\hat{\gamma}\hat{\xi}_{j2\omega_c}}(1+\hat{\gamma}-(1+\hat{\gamma}\hat{\xi}_{j2\omega_c})^{-\frac{1}{\hat{\gamma}}})+ \\
		&+& (1+\hat{\gamma})\left(\frac{\hat{\xi}_{j1\omega_c}}{1+\hat{\gamma}\hat{\xi}_{j2\omega_c}}\right)^2(\hat{\gamma} -(1+\hat{\gamma}\hat{\xi}_{j2\omega_c})^{-\frac{1}{\hat{\gamma}}})].
	\end{eqnarray*}
	
	However, if $i=t$, then 
	\begin{eqnarray*}
		\hat{\kappa}_j&=&
		\frac{1}{4}s_x\hat{\beta}_t[1-\frac{\hat{\xi}_{j1\omega_c}^{2}}{\hat{\xi}_{j2\omega_c}^{2}}-\frac{\hat{\xi}_{j2\omega_c}}{1+\hat{\gamma}\hat{\xi}_{j2\omega_c}}(1+\hat{\gamma}-(1+\hat{\gamma}\hat{\xi}_{j2\omega_c})^{-\frac{1}{\hat{\gamma}}})+ \\
		&+& (1+\hat{\gamma})\left(\frac{\hat{\xi}_{j1\omega_c}}{1+\hat{\gamma}\hat{\xi}_{j2\omega_c}}\right)^2(\hat{\gamma} -(1+\hat{\gamma}\hat{\xi}_{j2\omega_c})^{-\frac{1}{\hat{\gamma}}})]+\\
		&-& \frac{1}{2}s_x\left[\frac{\hat{\xi}_{j2\omega_c}}{\hat{\xi}_{j1\omega_c}}-\frac{\hat{\xi}_{j1\omega_c}}{1+\hat{\gamma}\hat{\xi}_{j2\omega_c}}\left(1+\hat{\gamma} -(1+\hat{\gamma}\hat{\xi}_{j2\omega_c})^{-1/\hat{\gamma}}\right)\right],
	\end{eqnarray*}
	with $i=0,1,...,p-1$  and $j=1,2,...,n$.
	
	Besides,
	\begin{equation}
	\Deltta_{\alpha}=\mathbf{\tau}^\top=(\hat{\tau}_1,...,\hat{\tau}_n), 
	\end{equation}
	with 
	\begin{eqnarray*}
		\hat{\tau}_j
		&=&
		-\frac{1}{2\hat{\alpha}}s_x\hat{\beta_t}\left( \frac{\hat{\xi}_{j1\omega_c}}{1+\hat{\gamma}\hat{\xi}_{j2\omega_c}}\right)\left(1+\hat{\gamma}-(1+\hat{\gamma}\hat{\xi}_{j2\omega_c})^{-\frac{1}{\hat{\gamma}}}\right)+ \\
		&+&\frac{1}{2\hat{\alpha}}s_x\hat{\beta}_t(1+\hat{\gamma})\left(\frac{\hat{\xi}_{j1\omega_c}}{1+\hat{\gamma}\hat{\xi}_{j2\omega_c}}\right)\left(\frac{\hat{\xi}_{j2\omega_c}}{1+\hat{\gamma}\hat{\xi}_{j2\omega_c}}\right)\left(\hat{\gamma}-(1+\hat{\gamma}\hat{\xi}_{j2\omega_c})^{-\frac{1}{\hat{\gamma}}}\right),
	\end{eqnarray*}
	for $j=1,...,n$.
	
	Finally, we have
	\begin{equation}
	\Deltta_{\gamma}=\mathbf{u}^\top=(\hat{u}_1,...,\hat{u}_n),
	\end{equation}
	where
	\begin{eqnarray*}
		\hat{u}_j&=&
		\frac{1}{2}s_x\hat{\beta}_t\left( \frac{\hat{\xi}_{j1\omega_r}}{1+\hat{\gamma}\hat{\xi}_{j2\omega_r}}\right)\left(1-\frac{1}{\hat{\gamma}^2}\log(1+\hat{\gamma}\hat{\xi}_{j2\omega_r})\right)+\\
		&-&\frac{1}{2}s_x\hat{\beta}_t(1+\hat{\gamma})\left(\frac{\hat{\xi}_{j1\omega_r}}{1+\hat{\gamma}\hat{\xi}_{j2\omega_r}}\right)\left(\frac{\hat{\xi}_{j2\omega_r}}{1+\hat{\gamma}\hat{\xi}_{j2\omega_r}}\right)\left(1-\frac{1}{\hat{\gamma}}(1+\hat{\gamma}\hat{\xi}_{j2\omega_r})^{-\frac{1}{\hat{\gamma}}}\right),
	\end{eqnarray*}
	for  $j=1,2,...,n$.
	
\end{enumerate}

\section{Application}\label{aplicacaodadosreais}

To illustrate the physical relevance of the methodology, we analyze a set of meteorological data corresponding to turbulent atmospheric fluctuations. We consider a set of data recorded by Station A-868 (Latitude -26.95083, Longitude -48.76194) in Itajaí, Santa Catarina, Brazil. The wind gust data were obtained from the National Institute of Meteorology (INMET) at \url{https://portal.inmet.gov.br}. We collected the maximum monthly wind speed (m/s) and the daily average atmospheric pressure (mb) corresponding to the day on which the maximum gust was recorded, from July 2010 to October 2020, resulting in a sample of size $n=124.$ 
Data are also available at \url{https://github.com/Raydonal/Extreme-value-BS-model}.  The descriptive statistics in Table \ref{stast} show that the distribution of the maximum monthly wind speed is positively skewed, characteristic of many extreme value phenomena. Furthermore, Figure \ref{gdispersaoafc}(b), which shows the sample autocorrelation function, supports the hypothesis of independence between the observed values, a common assumption in block-maxima approaches. 

\begin{table}[!htb]
	\begin{center}
		\caption{\label{stast}Descriptive statistics for the monthly maximum wind speed (m/s) in Itajaí-Brazil.}
        \vspace{0.2cm}
		\label{tabestdescrivasitaji}
		\begin{tabular}{ccccccccc}
			\hline
			n& Minimum&   Median&    Mean &    Maximum & SD& Skewness& Kurtosis \\
			\hline
			$124$& $8.20$&  $14.30$&   $14.73$&   $33.90$&$ 3.63$& $1.41$ & $5.01$ \\ 
			\hline
		\end{tabular}
	\end{center}
\end{table}

\begin{figure}[!htbp]
	\center
	\subfigure[]{\includegraphics[width=0.48\textwidth]{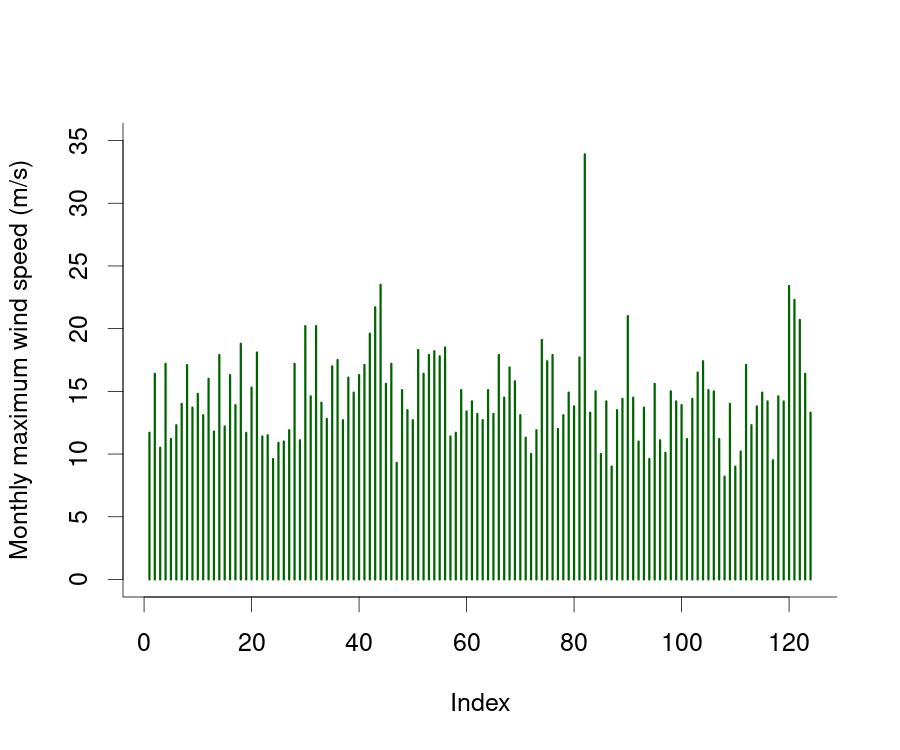} \label{fig3:a}} 
     \vspace{-0.5cm}
	\subfigure[]{\includegraphics[width=0.48\textwidth]{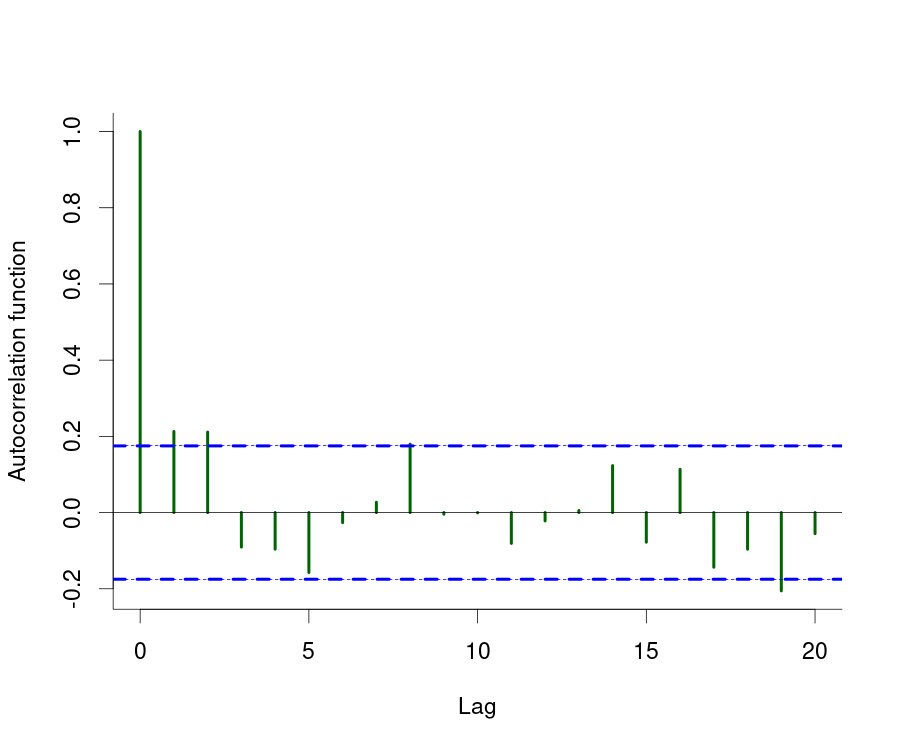} \label{fig3:b}}\\
	\caption{\label{gdispersaoafc} Scatter-plot of data (a) and the sample autocorrelation function of the monthly maximum wind speed (b)}
\end{figure}

In our analysis, we postulate an EVBS regression model where the response variable $Y$ is the natural logarithm of the maximum monthly wind speed $T$, and the explanatory variable (control parameter) is the daily average atmospheric pressure $X$. More precisely, we consider the model
\begin{equation}
Y_i=\log(T_i)=\beta_0+\beta_1x_{i}+\varepsilon_i, \quad i=1,...,124,
\label{modeloitajai}
\end{equation}
where $\varepsilon_i=\log(\delta_i)\sim \mbox{log-EVBS}(\alpha,0,\gamma)$. A preliminary fit yielded the maximum likelihood estimates (MLE) for the parameters, along with their standard errors (SE) and the p-values for the regression coefficients, shown in Table \ref{estcoefitajaipriori}. We note that the coefficient associated with the atmospheric pressure is significant at the $5\%$ level. 
\begin{table}[!htb]
	\begin{center}
		\caption{MLE for the model parameters with their respective standard errors and p-values for the regression coefficients.}
        \vspace{0.2cm}
		\label{estcoefitajaipriori}
		\begin{tabular}{c|c|c|c|c}
			\hline
			& $\hat{\beta}_0$ & $\hat{\beta}_1$ &$\hat{\alpha}$ & $\hat{\gamma}$ \\			
			\hline
			MLE & $25.5148$ & $-0.0227$ &$0.1857$ & $-0.1551$ \\
			\hline
			SE & $3.3338$& $0.0033$& $0.0127$ & $0.0472$ \\
			\hline
			$p$-value& $<0.0001$ &$<0.0001$ &------& ------ \\
			\hline
		\end{tabular}
	\end{center}
\end{table}

Regarding the adequacy of the assumed distribution for the response variable, we investigated the behavior of the randomized quantile residuals, which, according to \cite{leiva2016extreme}, are effective for this purpose. Figure \ref{residuoquantilico}(a) shows the normal probability plot with envelopes for the quantile residuals, suggesting no significant deviation from the hypothesis that the response follows a log-EVBS distribution. Furthermore, Shapiro-Wilk and Kolmogorov-Smirnov tests performed on the residuals, shown in Figure \ref{residuoquantilico}(b), yielded p-values of $0.3021$ and $0.4525$, respectively. Therefore, we do not reject the normality hypothesis for the residuals, confirming the adequacy of the log-EVBS model for the response distribution.
\begin{figure}[!htbp]
	\center
	\subfigure[]{\includegraphics[width=0.48\textwidth]{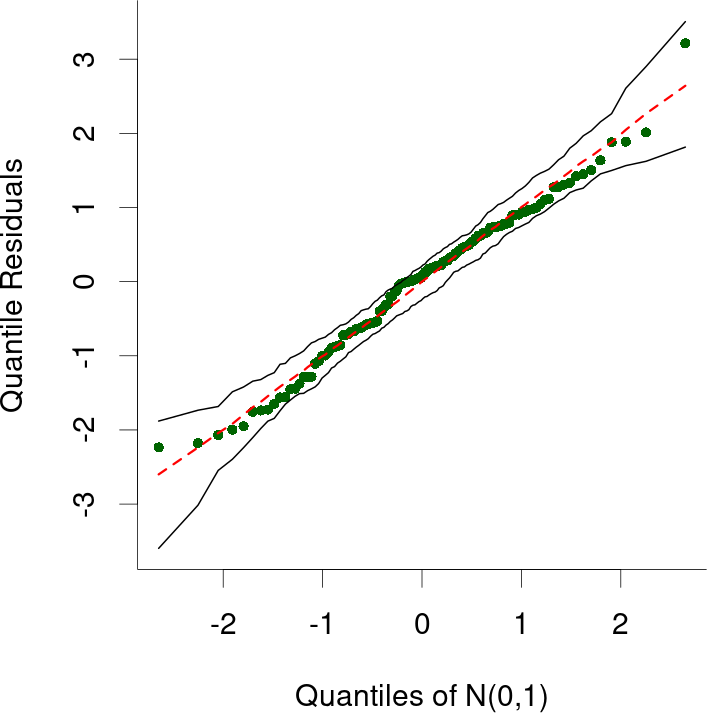} \label{fig4:a}} 
	\subfigure[]{\includegraphics[width=0.48\textwidth]{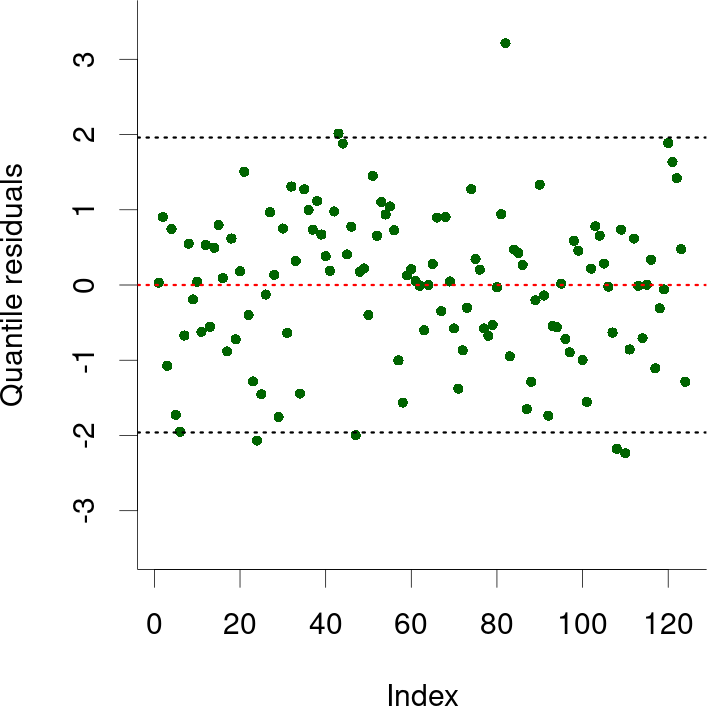} \label{fig4:b}}\\
	\caption{\label{residuoquantilico} Normal probability plot with envelope for quantile residuals (a) and index plot of quantile residuals (b)}
\end{figure}

We now proceed with a local influence analysis to identify influential observations by perturbing the model according to the case-weighting scheme. Under this scheme, the normalized eigenvalues are $0.69678$, $0.49511$, $0.44547$, and $0.26630$, as shown in Figure \ref{autovaleBjq7}(a). The largest eigenvalue corresponds to a 7-influential eigenvector, indicating the direction of maximum curvature and thus maximum model sensitivity.

Using equation (\ref{contribuicaoagregadaq}), we calculated the aggregate contribution of each basic perturbation vector to this 7-influential eigenvector. This identifies which specific observations contribute most to the direction of maximum curvature. This is illustrated in Figure \ref{autovaleBjq7}(b). Using the reference value $b(q=7)=\frac{1}{124}(0.69678) \approx 0.0056$, as suggested by \cite{poon1999conformal}, we classify observation \#82 as potentially influential and observation \#108 as marginally influential.
\begin{figure}[!htbp]
	\center
	\subfigure[]{\includegraphics[width=0.49\textwidth]{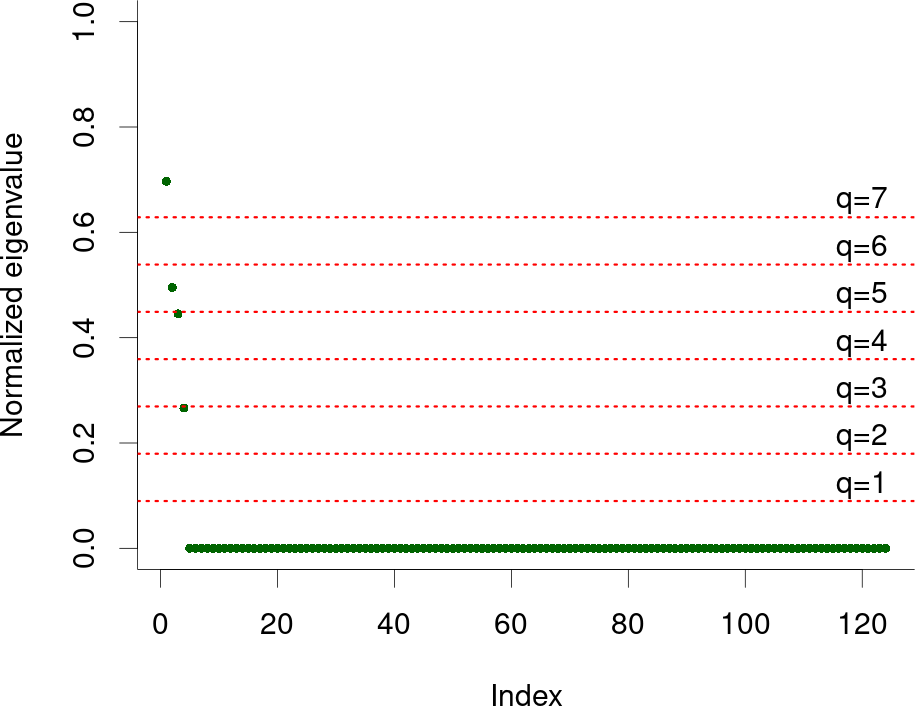} \label{fig5:a}} 
    \hspace{-0.2cm}
	\subfigure[]{\includegraphics[width=0.49\textwidth]{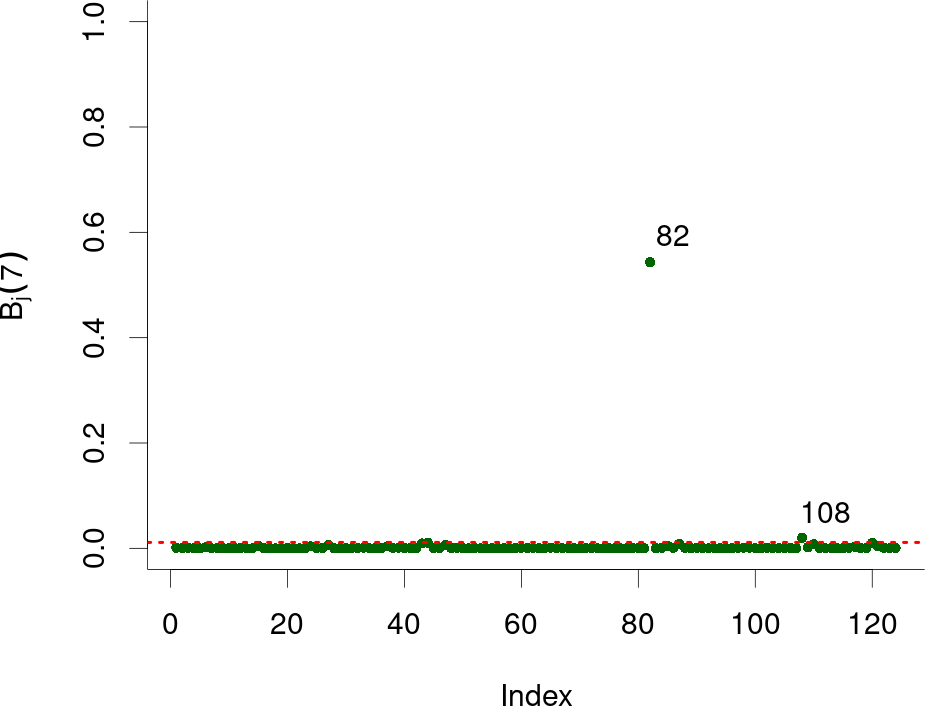} \label{fig5:b}}\\
	\caption{\label{autovaleBjq7} Normalized eigenvalues (a); aggregate contribution of the basic perturbation vectors under case-weighting (b)}
\end{figure}

To verify the impact of observations \#82 and \#108 on the parameter estimates, we refitted the model, removing each one individually. Table \ref{tabela2} shows the new maximum likelihood estimates (MLE), their respective percentage rate of change, denoted by $R_{\theta_j} = ((\hat{\theta}_{j_{(i)}} - \hat{\theta}_j) / |\hat{\theta}_j|) \times 100$, where $\hat{\theta}_{j_{(i)}}$ is the estimate of $\hat{\theta}_j$ after removing the $i$-th observation, and the p-values for the regression coefficients.

\begin{table}[!htb]
	\begin{center}
		\caption{Maximum likelihood estimation, rate of change (in percentage), and descriptive level associated with the regression coefficients after removing an observation.}
        \vspace{0.2cm}
		\label{tabela2}
		\begin{tabular}{m{4cm}|c|ccccc}\hline
			{Dropped Observation} & Statistic&${\hat{\beta}_0}$&${\hat{\beta}_1}$&${\hat{\alpha}}$&${\hat{\gamma}}$ \\
			\hline
			{}& {MLE}  &{24.6878}& {-0.0219} &{0.1861}&{-0.2694}  \\
			{82} &{$R_{\theta_j}$}&{-3.24\%}&{+3.64\%}& {+0.25\%}& {\textcolor{red}{-73.67\%}} \\
			& {$p$-value} &{$<0.0001$}&{$<0.0001$}&---& ---\\
			\hline
			{}& {MLE}  &{24.0750}& {-0.0213} &{0.1822}&{-0.1458}  \\
			{108} &{$R_{\theta_j}$}&{-5.64\%}&{+6.29\%}&{-1.89\%}&{+6.00\%} \\
			& {$p$-value} &{$<0.0001$}&{$<0.0001$}&---& ---\\
			\hline
		\end{tabular}
	\end{center}
\end{table}

As expected, the removal of observation \#82, identified as highly influential, leads to a significant change in the parameter estimates compared to the initial fit (Table \ref{estcoefitajaipriori}). In particular, the results in Table \ref{tabela2} reveal that excluding observation \#82 causes a massive variation of $-73.67\%$ in the estimate of $\gamma$, the parameter that governs the tail behavior of the distribution. However, this exclusion has a negligible impact on the estimate of $\alpha$ and the regression coefficients $\beta_0$ and $\beta_1$, not affecting the significance of the explanatory variable. In contrast, removing observation \#108, which was only marginally influential, does not cause any significant changes in the model parameters, confirming our diagnostic. 

Investigating the physical nature of observation \#82, we found it was recorded on April 26, 2017, when the wind reached a maximum speed of $33.9$ m/s with a low daily mean atmospheric pressure of $1004.33$ mb. This data point corresponds to a documented extreme weather event. The Climate Laboratory (LabClima) at UNIVALI reported this event, noting: ``Notice the satellite image this morning showing very little cloudiness in Santa Catarina and the cold front that brought strong winds...''. Simultaneously, the State Civil Defense reported the event and highlighted its severe impact: ``Strong winds hit the municipality causing an average of 40 incidents of roof damage, falling trees and power lines... In addition, it emphasized the following human damages: 03 Deaths (ages 03, 10, and 16) caused by the falling of a power line pole resulting in electric shock.'' This demonstrates that our method successfully identified a data point corresponding to a physically significant and catastrophic event.

After the local influence analysis and the residual checks, we conclude that the postulated regression model, given by equation (\ref{modeloitajai}), is adequate for explaining the maximum monthly wind speed as a function of atmospheric pressure in Itajaí. The final model is given by
\begin{equation}
Y_i=\log(T_i)=\beta_0+\beta_1x_{i}+\varepsilon_i,  \quad i=1,...,124.
\end{equation}
where the maximum likelihood estimates of the parameters (with standard errors in parentheses) are $\hat{\beta}_0=25.5148 (3.3338)$, $\hat{\beta}_1=-0.0227 (0.0033)$, $\hat{\alpha}=0.1857 (0.0127)$, and $\hat{\gamma}=-0.1551 (0.0472)$. This results in a predictive model for the monthly maximum wind speed ($T$) as a function of the daily mean atmospheric pressure ($X$) given by
$\hat{T}=\exp(25.5148-0.0227x)$, represented by the curve in Figure \ref{modeloajustado}.

\begin{figure}[!htb]
	\begin{center}
		\includegraphics[width=0.6\textwidth]{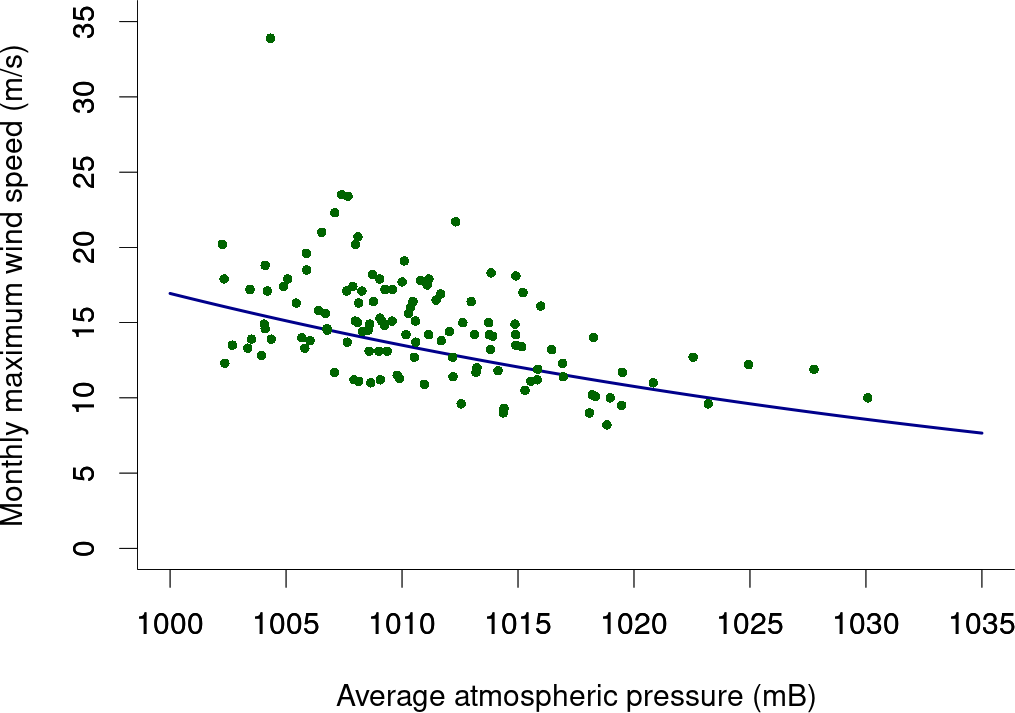}
		\caption{Scatter-plot and the fitted model of the  monthly maximum wind speed in Itajaí-Brazil}
		\label{modeloajustado}
	\end{center}	
\end{figure}

\section{Conclusion}\label{conclusao}

The task of drawing robust inferences from experimental data is a foundational challenge in statistical physics, deeply connected to the principles of entropy and irreversibility \cite{pachter2024entropy}. In this work, we have developed a formal framework for assessing the stability of extreme-value regression models, a critical step in the modeling of complex physical systems. We derived the necessary tools for a local influence analysis of the EVBS regression model under various perturbation schemes, enabling the rigorous identification of influential observations that could otherwise compromise physical interpretations.

The need for specialized, non-Gaussian models like the EVBS is not merely a statistical convenience but a physical necessity. Many complex systems, from fluid turbulence \cite{sosa2019} to systems near critical points \cite{sornette2006critical}, are known to generate observables with highly skewed, heavy-tailed distributions. The EVBS model is thus a physically motivated candidate for describing such phenomena. Our parameter estimation procedure can be framed as an inverse statistical problem \cite{nguyen2017inverse}, where the goal is to infer the underlying model parameters from noisy data. The stability analysis presented here is crucial for ensuring that the solutions to such inverse problems are robust and not mere artifacts of a few anomalous measurements.

Our analysis of real-world wind gust data demonstrated the practical power of this methodology. The local influence technique successfully pinpointed a single data point corresponding to a documented, catastrophic weather event, and we quantified its profound impact on the model's tail parameter, $\gamma$. This highlights the potential of our approach as a diagnostic tool for identifying and understanding the drivers of extreme events in natural systems.

For future work, this framework can be extended in several exciting directions. Moving beyond the static, independent-and-identically-distributed assumption of block maxima, one could incorporate dynamics by treating the error terms as a specific stochastic process \cite{jacobs2010stochastic}. This would open the door to employing more sophisticated mathematical machinery, such as subordinated point processes or Lévy flights, to capture the temporal correlations and memory effects often present in physical systems \cite{kobylych2017point,kozubowski2009distributional}. Furthermore, as data-driven approaches and machine learning revolutionize fields like weather forecasting \cite{lam2023}, the need for robust, physically-grounded statistical models becomes even more critical. The methods developed here can serve not as a competitor to AI, but as a vital complement: a tool for validating data-driven forecasts, providing physical priors, and rigorously characterizing the uncertainty of their predictions.

\paragraph{Data:}
Data were obtained freely from the National Institute of Meteorology \url{https://portal.inmet.gov.br} and also available available at \url{https://github.com/Raydonal/Extreme-value-BS-model}.

\paragraph{Computational Information:} Simulations were performed using the {\sf R} language and environment for statistical computing version~3.5.2.~\cite{Rmanual}, in a computer with processor Intel\textcopyright \ Core\texttrademark, i7-5500UK CPU \SI{4}{\giga\hertz}, \SI{16}{\giga\byte} RAM, System Type \SI{64}{\bit} operating system Linux. 

\section*{Acknowledgements} 
This work was supported in part by the following Brazilian agencies: Conselho Nacional de Desenvolvimento Cient\'ifico e Tecnol\'ogico (CNPq),  No.~307626/2022-9 (AMSM), No.~303192/2022-4, 402519/2023-0  (RO)  and Fundação de Amparo a Ciência e Tecnologia do Estado da Bahia (FAPESB), No. APP0021/2023 (RO), and Coordena\c c\~ao de Aperfei\c coamento de Pessoal de N\'ivel Superior (CAPES).


\section*{Conflicts of Interest} The authors declare no conflict of interest. The funders had no role in the design of the
study; in the collection, analyzes, or interpretation of data; in the writing of the manuscript, or in the decision to
publish the results.

\appendix

\section{Appendix}\label{apendiceA}

\subsection{Proof of proposition \ref{prop:1.1.1}}
\label{pr1} The monotonicity of $f$ will be investigated by studying the sign of the first derivative of the function $g=\log(f)$. In the case where $\gamma<-1$, we have	
\begin{eqnarray*}
	g(y)
	&=&
	\log(f(y))=-\log(\alpha)+\log\left(\cosh\left(\frac{y-\eta}{2}\right)\right)-\left(\frac{1+\gamma}{\gamma}\right)\log\left(1+\frac{2\gamma}{\alpha}\sinh\left(\frac{y-\eta}{2}\right)\right)\\
	&-&
	\left(1+\frac{2\gamma}{\alpha}\sinh\left(\frac{y-\eta}{2}\right)\right)^{-1/\gamma}.
\end{eqnarray*}
Note that $g$ is derivable on the interval $(-\infty, \eta+2\mathrm{arcsinh}(-\alpha/2\gamma))$ and its first derivative can be written in the form
\begin{equation}
g'(y)=\frac{1}{2}\tanh\left(\frac{y-\eta}{2}\right)-\frac{\frac{1}{\alpha}\cosh\left(\frac{y-\eta}{2}\right)}{1+\frac{2\gamma}{\alpha}\sinh\left(\frac{y-\eta}{2}\right)}\left((1+\gamma)-\left(1+\frac{2\gamma}{\alpha}\sinh\left(\frac{y-\eta}{2}\right)\right)^{-1/\gamma}\right).
\end{equation}

Initially, let us show that $g$ is strictly increasing in $(\eta, \eta+2\mathrm{arcsinh}(-\alpha/2\gamma))$. In fact, given that $\alpha>0$, $\cosh(z)>0$ for all $z\in \R$ e $\left(1+\frac{2\gamma}{\alpha}\sinh\left(\frac{y-\eta}{2}\right)\right)>0$, we have
\[
-\frac{\frac{1}{\alpha}\cosh\left(\frac{y-\eta}{2}\right)}{1+\frac{2\gamma}{\alpha}\sinh\left(\frac{y-\eta}{2}\right)}<0.\]
On the other hand, if $\gamma<-1$, then
$$(1+\gamma)-\left(1+\frac{2\gamma}{\alpha}\sinh\left(\frac{y-\eta}{2}\right)\right)^{-1/\gamma}<0.$$
Hence, 
\[
-\frac{\frac{1}{\alpha}\cosh\left(\frac{y-\eta}{2}\right)}{1+\frac{2\gamma}{\alpha}\sinh\left(\frac{y-\eta}{2}\right)}\left((1+\gamma)-\left(1+\frac{2\gamma}{\alpha}\sinh\left(\frac{y-\eta}{2}\right)\right)^{-1/\gamma}\right)>0.\]

Hence, if $y>\eta$, we have $\tanh\left(\frac{y-\eta}{2}\right)>0$ and, consequently, $g'(y)>0$ for all $y\in(\eta, \eta+2\mathrm{arcsinh}(-\alpha/2\gamma))$. That implies that $g$ is strictly increasing over the interval $(\eta, \eta+2\mathrm{arcsinh}(-\alpha/2\gamma))$. Since the logarithmic function is strictly increasing, such property can be extended from $g$ for $f$.

\subsection{Proof of proposition \ref{prop:1.1.2}}
\label{pr2}
Under the hypothesis that  $\gamma=-1$, the first derivative $g'$ reduces to the form
\begin{equation}
g'(y)=\frac{1}{2}\frac{\sinh\left(\frac{y-\eta}{2}\right)}{\cosh\left(\frac{y-\eta}{2}\right)}+ \frac{1}{\alpha}\cosh\left(\frac{y-\eta}{2}\right).
\label{e:limgamamenos0}
\end{equation}
Therefore, when using the identity $\cosh^2(z)-\sinh^2(z)=1$, valid for all $z\in \R$, we obtain 
\begin{equation}
g'(y)=0 \Longleftrightarrow \frac{1}{\alpha}\sinh^2\left(\frac{y-\eta}{2}\right)+\frac{1}{2}\sinh\left(\frac{y-\eta}{2}\right)+\frac{1}{\alpha}=0.
\label{e:limgamamenos1}
\end{equation}
For the equation $g'(y)=0$ to have a solution, it is necessary that $\alpha\geq 4$. Under these conditions, we have the roots
$$y_1=\eta+2\mathrm{arcsinh}\left(-\frac{\alpha}{4}-\frac{1}{4}\sqrt{\alpha^2-16}\right) \   \mbox{and} \ y_2=\eta+2\mathrm{arcsinh}\left(-\frac{\alpha}{4}+\frac{1}{4}\sqrt{\alpha^2-16}\right).$$
Regarding the sign of $g'(y)$, under the assumption that $\alpha\geq 4$, we have:
\begin{enumerate}
	\item [i)]  $g'(y)<0$ when $ y_1<y<y_2$,
	\item [ii)]$g'(y)=0$ when $ y=y_1$ or $y=y_2$,
	\item [iii)]$g'(y)>0$ when $ y<y_1$ or $y>y_2$.
\end{enumerate} 
Thus, $f$ is increasing in the intervals $$(-\infty,\eta+2\mathrm{arcsinh}\left(-\frac{\alpha}{4}-\frac{1}{4}\sqrt{\alpha^2-16}\right))$$ and $$(\eta+2\mathrm{arcsinh}\left(-\frac{\alpha}{4}+\frac{1}{4}\sqrt{\alpha^2-16}\right),\eta+2{\mathrm{arcsinh}}\left(-{\alpha}/{2\gamma}\right)).$$ On the other hand, $f$ is decreasing in the interval $$(\eta+2\mathrm{arcsinh}\left(-\frac{\alpha}{4}-\frac{1}{4}\sqrt{\alpha^2-16}\right),\eta+2\mathrm{arcsinh}\left(-\frac{\alpha}{4}+\frac{1}{4}\sqrt{\alpha^2-16}\right)).$$

Now, let us approach the case $0<\alpha<2$. For all $y\in \R$, we have
\begin{equation}
-1<\frac{\sinh\left(\frac{y-\eta}{2}\right)}{\cosh\left(\frac{y-\eta}{2}\right)}<1.
\end{equation}
Consequently,
\begin{equation}
-\frac{1}{2}+\frac{1}{\alpha}\cosh\left(\frac{y-\eta}{2}\right)< \frac{1}{2}\frac{\sinh\left(\frac{y-\eta}{2}\right)}{\cosh\left(\frac{y-\eta}{2}\right)}+\frac{1}{\alpha}\cosh\left(\frac{y-\eta}{2}\right)<\frac{1}{2}+\frac{1}{\alpha}\cosh\left(\frac{y-\eta}{2}\right).
\label{ineqderivada}
\end{equation}
From (\ref{ineqderivada}) and (\ref{e:limgamamenos0}), we obtain
\begin{equation}
-\frac{1}{2}+\frac{1}{\alpha}\cosh\left(\frac{y-\eta}{2}\right)< g'(y)<\frac{1}{2}+\frac{1}{\alpha}\cosh\left(\frac{y-\eta}{2}\right).
\label{ineqsimulteglinha}
\end{equation}

On the other hand, if $0<\alpha<2$, then $0<-\frac{1}{2}+\frac{1}{\alpha}$. Also, using the fact that $\cosh\left(\frac{y-\eta}{2}\right)\geq 1$ for all $y\in \R$, we have obtained the following inequality:
\begin{equation}
0<-\frac{1}{2}+\frac{1}{\alpha}\leq -\frac{1}{2}+\frac{1}{\alpha}\cosh\left(\frac{y-\eta}{2}\right).
\label{ineqsimultegelinhacos}
\end{equation}
Thus, $g'(y)>0$ and $f$ is strictly increasing when $0<\alpha<2$ and $\gamma=-1$.

Finally, let us approach the case $2 \leq \alpha<4$. Here it is worth noting that $g'$does not change sign when  $\alpha \in (0,4)$. In fact, let us suppose by contradiction that there are $t_1$ and $t_2$, belonging to the interval $(-\infty,\eta_i +2{\mathrm{arcsinh}}\left(-{\alpha}/{2\gamma}\right))$ such that $g'(t_1)<0$ and $g'(t_2)>0$. Given that $g'$ is continuous, the intermediate value theorem assures us the existence of  $t_0 \in (-\infty,\eta_i +2{\mathrm{arcsinh}}\left(-{\alpha}/{2\gamma}\right))$ such that $g'(t_0)=0$. This is not possible, as $g'$ only admits critical points when $\alpha\geq 4$.

\subsection{Proof of proposition \ref{prop:1.1.3}} 
\label{pr3}

If $\gamma=0$, the function that assigns to each $y\in \R$ the real number $g(y)=\log(f(y))$ is derivable and its first derivative can be written in the form
\begin{equation}
g'(y)=\frac{1}{2} \frac{\sinh\left(\frac{y-\eta}{2}\right)}{\cosh\left(\frac{y-\eta}{2}\right)}-\frac{1}{\alpha}\cosh\left(\frac{y-\eta}{2}\right)\left[1-\exp\left(-\frac{2}{\alpha}\sinh\left(\frac{y-\eta}{2}\right)\right)\right].
\end{equation}
At this point, let us observe that, regardless of the value of $\alpha$, we have $g'(\eta)=0$. As far as the second derivative of $g$, we obtain
\begin{eqnarray}
g''(y)
&=&\frac{1}{4}-\frac{1}{4}\mathrm{tanh}^2\left(\frac{y-\eta}{2}\right) -\frac{1}{2\alpha}\sinh\left(\frac{y-\eta}{2}\right)+\\
&+&\frac{1}{\alpha}\exp\left(-\frac{2}{\alpha}\sinh\left(\frac{y-\eta}{2}\right)\right)\left[\frac{1}{2}\sinh\left(\frac{y-\eta}{2}\right)-\frac{1}{\alpha}\cosh^2\left(\frac{y-\eta}{2}\right)\right].
\end{eqnarray}
Hence, $g''(\eta)=\frac{1}{4}-\frac{1}{\alpha^2}$. Leading us to the following conclusions:
\begin{enumerate}
	\item [i)] If $0<\alpha<2$, then $g''(\eta)<0$.
	\item [ii)] If $\alpha=2$, then $g''(\eta)=0$.
	\item [iii)] If $\alpha>2$, then $g''(\eta)>0$.
\end{enumerate}

Therefore, $\eta$ is a local maximum of $f$, when $0<\alpha<2$.  If $\alpha>2$, then $\eta$ is a local minimum of $f$. 

Next, we will show that if $0<\alpha<2$, then $g$ is strictly increasing in the interval $(-\infty,\eta)$. In fact, for every real number $z$ we have $\cosh(z)\geq 1$. Consequently,
\begin{equation}
\frac{1}{2}\geq \frac{1}{2}\frac{1}{\cosh^2\left(\frac{y-\eta}{2}\right)}.
\label{d:meioinvcosh2}
\end{equation}

The hypothesis $0<\alpha<2$ implies
\begin{equation}
\frac{1}{2}<\frac{1}{\alpha}.
\label{d:meioalphamenorque2}
\end{equation}
From the inequalities (\ref{d:meioinvcosh2}) and (\ref{d:meioalphamenorque2}), adding the fact that the $\cosh(\cdot)$ function is strictly positive, we obtain
\begin{equation}
\frac{1}{\alpha}\cosh\left(\frac{y-\eta}{2}\right)> \frac{1}{2}\frac{1}{\cosh\left(\frac{y-\eta}{2}\right)}.
\label{d:alphameiocoshinv}
\end{equation}
Now, let us notice that when $y<\eta$, we have $-\sinh\left(\frac{y-\eta}{2}\right)>0$. Hence, from the inequality (\ref{d:alphameiocoshinv}) we have concluded that
\begin{equation}
-\frac{1}{\alpha}\sinh\left(\frac{y-\eta}{2}\right)\cosh\left(\frac{y-\eta}{2}\right)>-\frac{1}{2}\frac{\sinh\left(\frac{y-\eta}{2}\right)}{\cosh\left(\frac{y-\eta}{2}\right)}.
\label{d:invalphaprodcoshsenh}
\end{equation}

On the other hand, the assumptions $y<\eta$ and $0<\alpha<2$ entail $\sinh\left(\frac{y-\eta}{2}\right)<0$ and $-\frac{2}{\alpha}<-1$, respectively. So, 
\begin{equation}
\exp\left(-\frac{2}{\alpha}\sinh\left(\frac{y-\eta}{2}\right)\right)>\exp\left(-\sinh\left(\frac{y-\eta}{2}\right)\right).
\label{d:expxmaior1maisx}
\end{equation}
At this point, by using the fact that $e^x>1+x$ for all $x>0$ on the right-hand member of the inequality  (\ref{d:expxmaior1maisx}) we have established that
\begin{equation}
\exp\left(-\frac{2}{\alpha}\sinh\left(\frac{y-\eta}{2}\right)\right)> 1-\sinh\left(\frac{y-\eta}{2}\right)
\label{d:expsenhmaior1maissenh}
\end{equation}
By multiplying both members of the inequality (\ref{d:expsenhmaior1maissenh}) by $\frac{1}{\alpha}\cosh\left(\frac{y-\eta}{2}\right)$, we obtain
\begin{equation}
\frac{1}{\alpha}\cosh\left(\frac{y-\eta}{2}\right)\exp\left(-\frac{2}{\alpha}\sinh\left(\frac{y-\eta}{2}\right)\right)>\frac{1}{\alpha}\cosh\left(\frac{y-\eta}{2}\right)\left( 1-\sinh\left(\frac{y-\eta}{2}\right)\right).
\label{d:coshexpmaiorconhsehh}
\end{equation}
Therefore, from (\ref{d:invalphaprodcoshsenh}) and (\ref{d:coshexpmaiorconhsehh} ), we have concluded that the following inequality is valid:
\begin{equation}
\frac{1}{2}\cdot \frac{\sinh\left(\frac{y-\eta}{2}\right)}{\cosh\left(\frac{y-\eta}{2}\right)}-\frac{1}{\alpha}\cosh\left(\frac{y-\eta}{2}\right)\left[1-\exp\left(-\frac{2}{\alpha}\sinh\left(\frac{y-\eta}{2}\right)\right)\right]>0
\end{equation}

From the above, it is assured that $g$, and therefore $f$, is strictly increasing in $(-\infty,\eta)$.

\section{Inputs of the Hessian matrix of the log-likelihood function}\label{apendiceB}
The Hessian matrix $\ddL_{\hatttheta}$, for the case $\gamma=0$, has the form 
\begin{equation}
\ddL_{\hatttheta}=
{\left(
	\begin{array}{ll}
	\ddL_{11}&\ddL_{12}\\
	\ddL_{21}&\ddL_{22}\\
	\end{array}
	\right),}_{(p+1)\times(p+1)}
\end{equation}
where each component block matrix is given by
\begin{eqnarray*}
	\ddL_{11}&=&
	\left({{\frac{\partial^2 l (\ttheta )}{\partial \beta_k \partial \beta_j}}} \left|_{\bbeta={\hatbbeta},\alpha=\hat{\alpha}}\right. \right)=  -\frac{1}{4}\displaystyle \sum_{i=1}^{n} x_{ij}x_{ik}\left({\hat{\xi}_{i1}}^2\exp(-\hat{\xi}_{i2})\right)+\\
	&-&\frac{1}{4}\displaystyle \sum_{i=1}^{n}x_{ij}x_{ik}\left[\hat{\xi}_{i2}(1-\exp(-\hat{\xi}_{i2})+1 -\left(\frac{\hat{\xi}_{i1}}{\hat{\xi}_{i2}}\right)^2\right],
	\mbox{ for }  j,k=0,...,p-1.
\end{eqnarray*}

\begin{eqnarray*}
	\ddL_{12}&=&\left({{\frac{\partial^2 l (\ttheta )}{\partial \alpha \partial \beta_j}}} \left|_{\bbeta={\hatbbeta},\alpha=\hat{\alpha}}\right. \right)= -\frac{1}{2\hat{\alpha}} \displaystyle \sum_{i=1}^{n} x_{ij}\hat{\xi}_{i1}\left[1-\exp(-\hat{\xi}_{i2})\cdot (1-\hat{\xi}_{i2})\right],
\end{eqnarray*}
with $j=0,...,p-1$.
\begin{eqnarray*}
	\ddL_{21}&=&\ddL_{12}^{\top}.
\end{eqnarray*}

\begin{eqnarray*}
	\ddL_{22}&=&\left({{\frac{\partial^2 l (\ttheta )}{\partial \alpha \partial \alpha}}} \left|_{\bbeta={\hatbbeta},\alpha=\hat{\alpha}}\right. \right)= \frac{n}{\hat{\alpha}^2}-\frac{2}{\hat{\alpha}^2} \displaystyle \sum_{i=1}^{n}\hat{\xi}_{i2}\left(1-\exp(-\hat{\xi}_{i2})\right) -\frac{1}{\hat{\alpha}^2}\displaystyle \sum_{i=1}^{n}\hat{\xi}_{i2}^{2}\exp(-\hat{\xi}_{i2}).
\end{eqnarray*}

In the case under $\gamma\neq 0$, we have
\begin{equation}
\ddL_{\hatttheta}=
{\left(
	\begin{array}{lll}
	\ddL_{11}&\ddL_{12}&\ddL_{13}\\
	\ddL_{21}&\ddL_{22}&\ddL_{23}\\
	\ddL_{31}&\ddL_{32}&\ddL_{33}\\
	\end{array}
	\right)}_{(p+2)\times(p+2)}.
\end{equation}
where the entries of the matrices that make up $\ddL_{\hatttheta}$ are given by
\begin{eqnarray*}
	\ddL_{11}&=& \left({{\frac{\partial^2 l (\ttheta )}{\partial \beta_k \partial \beta_j}}} \left|_{\bbeta={\hatbbeta},\alpha=\hat{\alpha},\gamma=\hat\gamma}\right. \right)=  \frac{1}{4}\displaystyle \sum_{i=1}^{n} x_{ij}x_{ik}\left[1-\left(\frac{\hat{\xi}_{12}}{\hat{\xi}_{i1}}\right)^2\right]+\\&+& \left(-\frac{1}{4}\right)\displaystyle \sum_{i=1}^{n} x_{ij}x_{ik}\left(\frac{\hat{\xi}_{i2}}{1+\hat{\gamma}\hat{\xi}_{i2}}\right)\left[1+\hat{\gamma}-\left(1+ \hat{\gamma} \hat{\xi}_{i2}\right)^{-1/\hat{\gamma}}\right]+ \\
	&+&\frac{1}{4}\displaystyle\sum_{i=1}^{n} x_{ij}x_{ik}\left(\frac{\hat{\xi}_{i1}}{1+\hat{\gamma}\hat{\xi}_{i2}}\right)^2(1+\hat{\gamma})\left[\hat{\gamma}-\left(1+ \hat{\gamma} \hat{\xi}_{i2}\right)^{-1/\hat{\gamma}}\right],
\end{eqnarray*}
where   $j,k=0,...,p-1$.
\begin{eqnarray*}
	\ddL_{12} &=& \left({{\frac{\partial^2 l (\ttheta )}{\partial \alpha \partial \beta_j}}} \left|_{\bbeta={\hatbbeta},\alpha=\hat{\alpha},\gamma=\hat\gamma}\right. \right)= -\frac{1}{2\hat{\alpha}}\displaystyle \sum_{i=1}^{n} x_{ij}\left(\frac{\hat{\xi}_{i1}}{1+\hat{\gamma}\hat{\xi}_{i2}}\right)\left[1+\hat{\gamma}-\left(1+ \hat{\gamma} \hat{\xi}_{i2}\right)^{-1/\hat{\gamma}}\right]+ \\
	&+&\frac{1}{2\hat{\alpha}}\displaystyle\sum_{i=1}^{n} x_{ij}\frac{\hat{\xi}_{i1}\hat{\xi}_{i2}}{(1+\hat{\gamma}\hat{\xi}_{i2})^2}(1+\hat{\gamma})\left[\hat{\gamma}-\left(1+ \hat{\gamma} \hat{\xi}_{i2}\right)^{-1/\hat{\gamma}}\right], \mbox{ where }  j=0,...,p-1.
\end{eqnarray*}

\begin{eqnarray*}
	\ddL_{13}& =& \left({{\frac{\partial^2 l (\ttheta )}{\partial \gamma \partial \beta_j}}} \left|_{\bbeta={\hatbbeta},\alpha=\hat{\alpha},\gamma=\hat\gamma}\right. \right)= -\frac{1}{2}\displaystyle\sum_{i=1}^{n} x_{ij}\frac{\hat{\xi}_{i1}\hat{\xi}_{i2}}{(1+\hat{\gamma}\hat{\xi}_{i2})^2}(1+\hat{\gamma})\left[1-\frac{1}{\hat{\gamma}}\left(1+{\hat{\gamma}} \hat{\xi}_{i2}\right)^{-1/\hat{\gamma}}\right]+ \\
	&+&\frac{1}{2}\displaystyle \sum_{i=1}^{n} x_{ij}\left(\frac{\hat{\xi}_{i1}}{1+\hat{\gamma}\hat{\xi}_{i2}}\right)\left[1-\frac{1}{\hat{\gamma}^2}\log(1+\hat{\gamma}\hat{\xi}_{i2})\cdot \left(1+ \hat{\gamma} \hat{\xi}_{i2}\right)^{-1/\hat{\gamma}}\right],  j=0,...,p-1.
\end{eqnarray*}

\begin{eqnarray*}
	\ddL_{22}&=& \left({{\frac{\partial^2 l (\ttheta )}{\partial \alpha \partial \alpha}}} \left|_{\bbeta={\hatbbeta},\alpha=\hat{\alpha},\gamma=\hat\gamma}\right. \right)=  \frac{n}{\hat{\alpha}^2}-\frac{2}{\hat{\alpha}^2}\displaystyle \sum_{i=1}^{n} \left(\frac{\hat{\xi}_{i2}}{1+\hat{\gamma}\hat{\xi}_{i2}}\right)\left[1+\hat{\gamma}-\left(1+ \hat{\gamma} \hat{\xi}_{i2}\right)^{-1/\hat{\gamma}}\right]+\\
	&+&\frac{1}{\hat{\alpha}^2}\displaystyle\sum_{i=1}^{n} \left(\frac{\hat{\xi}_{i2}}{1+\hat{\gamma}\hat{\xi}_{i2}}\right)^2(1+\hat{\gamma})\left[\hat{\gamma}-\left(1+ \hat{\gamma} \hat{\xi}_{i2}\right)^{-1/\hat{\gamma}}\right].
\end{eqnarray*}

\begin{eqnarray*}
	\ddL_{23}& =& \left({{\frac{\partial^2 l (\ttheta )}{\partial \gamma \partial \alpha}}} \left|_{\bbeta={\hatbbeta},\alpha=\hat{\alpha},\gamma=\hat\gamma}\right. \right)= -\frac{1}{\hat{\alpha}}\displaystyle\sum_{i=1}^{n} \left(\frac{\hat{\xi}_{i2}}{1+\hat{\gamma}\hat{\xi}_{i2}}\right)^2(1+\hat{\gamma})\left[1-\frac{1}{\hat{\gamma}}\left(1+{\hat{\gamma}} \hat{\xi}_{i2}\right)^{-1/\hat{\gamma}}\right]+\\ 
	&+&\frac{1}{\hat{\alpha}}\displaystyle \sum_{i=1}^{n}\left(\frac{\hat{\xi}_{i2}}{1+\hat{\gamma}\hat{\xi}_{i2}}\right)\left[1-\frac{1}{\hat{\gamma}^2}\log(1+\hat{\gamma}\hat{\xi}_{i2})\cdot \left(1+ \hat{\gamma} \hat{\xi}_{i2}\right)^{-1/\hat{\gamma}}\right].
\end{eqnarray*}

\begin{eqnarray*}
	\ddL_{33} &=& \left({{\frac{\partial^2 l (\ttheta )}{\partial \gamma \partial \gamma}}} \left|_{\bbeta={\hatbbeta},\alpha=\hat{\alpha},\gamma=\hat\gamma}\right. \right)= \displaystyle \sum_{i=1}^{n}\left(\frac{\hat{\xi}_{i2}}{1+\hat{\gamma}\hat{\xi}_{i2}}\right)^2 +\\ 
	&+&\displaystyle \sum_{i=1}^{n}\left[\frac{1}{\hat{\gamma}}\left(\frac{\hat{\xi}_{i2}}{1+\hat{\gamma}\hat{\xi}_{i2}}\right)^2 + \frac{1}{\hat{\gamma}^2}\left(\frac{\hat{\xi}_{i2}}{1+\hat{\gamma}\hat{\xi}_{i2}}\right)-\frac{2}{\hat{\gamma}^3}\log(1+\hat{\gamma}\hat{\xi}_{i2})\right]g(\hat{\gamma}) \\ 
	&+&\displaystyle \sum_{i=1}^{n}\left(\frac{1}{\hat{\gamma}^2}\log(1+\hat{\gamma}\hat{\xi}_{i2})-\frac{1}{\hat{\gamma}}\frac{\hat{\xi}_{i2}}{(1+\hat{\gamma}\hat{\xi}_{i2})}\right)g'(\hat{\gamma}).
\end{eqnarray*}

\begin{eqnarray*}
	\ddL_{21}={\ddL_{12}}^{\top}, \ddL_{31}={\ddL_{13}}^{\top} \mbox{ e } \ \ddL_{32}={\ddL_{23}}^{\top}.
\end{eqnarray*}
Being that $\hat{\xi}_{i1}=\frac{2}{\hat{\alpha}}\mathrm{cosh}\left(\frac{y_i-\mathbf{x}_{i}^{\top}\hatbbeta}{2}\right)$ and $\hat{\xi}_{i2}=\frac{2}{\hat{\alpha}}\mathrm{sinh}\left(\frac{y_i-\mathbf{x}_i^{\top}\hatbbeta}{2}\right),$ for each $i=1,2,...,n,$ where $g(r)=1-(1+r\hat{\xi}_{i2})^{-1/r}$ for $r\neq 0.$

\end{document}